\newlength{\bibitemsep}\setlength{\bibitemsep}{0.2\baselineskip plus .0\baselineskip minus .0\baselineskip}
\newlength{\bibparskip}\setlength{\bibparskip}{2pt}
\let\oldthebibliography\thebibliography
\renewcommand\thebibliography[1]{%
	\oldthebibliography{#1}%
	\setlength{\parskip}{\bibitemsep}%
	\setlength{\itemsep}{\bibparskip}%
}
\newcommand{\R}{\mathbb{R}}
\newcommand{\M}{\mathcal{M}}
\newcommand{\bM}{M}
\newcommand{\I}{\mathcal{I}}
\newcommand{\del}{\backslash}
\newcommand\bmat[1]{\begin{bmatrix} #1 \end{bmatrix}}
\newcommand{\OPT}{OPT}
\newcommand{\norm}[1]{\left\lVert#1\right\rVert}
\DeclareMathOperator{\E}{\mathbb{E}}
\DeclareMathOperator{\Prob}{\mathrm{Pr}}
\newtheorem{theorem}{Theorem}
\newtheorem{corollary}{Corollary}
\newtheorem{lemma}{Lemma}
\newenvironment{proof}[1][]{\par \noindent {\bf Proof #1}\ }{\hfill$\Box$\par \vspace{11pt}}
\newcommand*\samethanks[1][\value{footnote}]{\footnotemark[#1]}
\title{Maximizing the Minimum Eigenvalue in Constant Dimension}
\author[1]{Adam Brown\thanks{ajmbrown@gatech.edu, msingh94@gatech.edu; supported in part by NSF CCF-2106444 and NSF CCF-1910423.}}
\author[2]{Aditi Laddha \thanks{aditi.laddha@yale.edu; supported in part by the Institute for Foundations of Data Science at Yale and NSF CCF- 2007443.}}
\author[1]{Mohit Singh\samethanks[1]}
\affil[1]{Georgia Institute of Technology.}
\affil[2]{Yale University.}
\begin{document}
\clearpage
\maketitle
\thispagestyle{empty}
\begin{abstract}
   In an instance of the minimum eigenvalue problem, we are given a collection of $n$ vectors $v_1,\ldots, v_n \subset {\R^d}$, and the goal is to pick a subset $B\subseteq [n]$ of given vectors to maximize the minimum eigenvalue of the matrix $\sum_{i\in B} v_i v_i^{\top} $. Often, additional combinatorial constraints such as cardinality constraint $\left(|B|\leq k\right)$ or matroid constraint ($B$ is a basis of a matroid defined on $[n]$) must be satisfied by the chosen set of vectors. The minimum eigenvalue problem with matroid constraints  models a wide variety of problems including the Santa Clause problem, the E-design problem, and the constructive Kadison-Singer problem.

    In this paper, we give a randomized algorithm that finds a set $B\subseteq [n]$ subject to any matroid constraint whose minimum eigenvalue is at least $(1-\epsilon)$ times the optimum, with high probability. The running time of the algorithm is $O\left( n^{O(d\log(d)/\epsilon^2)}\right)$. In particular, our results give a polynomial time asymptotic scheme when the dimension of the vectors is constant. Our algorithm uses a convex programming relaxation of the problem after guessing a rescaling which allows us to apply pipage rounding and matrix Chernoff inequalities to round to a good solution. The key new component is a structural lemma which enables us to ``guess'' the appropriate rescaling, which could be of independent interest. Our approach generalizes the approximation guarantee to monotone, homogeneous functions and as such we can maximize $\det(\sum_{i\in B} v_i v_i^\top)^{1/d}$, or minimize any norm of the eigenvalues of the matrix $\left(\sum_{i\in B} v_i v_i^\top\right)^{-1} $, with the same running time under some mild assumptions. As a byproduct, we also get a simple algorithm for an algorithmic version of Kadison-Singer problem.
\end{abstract}
\newpage
\clearpage
\setcounter{page}{1}
\section{Introduction}

Subset selection problems with spectral objectives offer a natural model for studying problems in a variety of fields, including numerical linear algebra~\cite{AvronBoutsidis2013}, graph theory~\cite{Batson2008}, convex geometry~\cite{DiSumma2015, J-simplex}, resource allocation~\cite{AsadpourSaberiMaxMin,ChuzhoyMaxMin}, and optimal design of experiments~\cite{Allen-Zhu2017,a-design-hard,LauZhouLocalSearch}, all under a single umbrella. 

In this work, we consider the minimum eigenvalue problem. In an instance of a minimum eigenvalue problem, we are given a collection of $n$ vectors $v_1,\ldots, v_n \in {\R^d}$, and the goal is to pick a subset $B\subseteq [n]$ of given vectors to maximize the minimum eigenvalue of the matrix $\sum_{i\in B} v_i v_i^\top $. The selected set $B$ must satisfy additional constraints such as cardinality, partition, or more generally matroid constraints. While much of the focus in previous works~\cite{Allen-Zhu2017,a-design-hard,LauZhouLocalSearch} has been on cardinality constraints, in this work, we consider general matroid constraints. The generality of matroid constraints allows us to model the algorithmic version of the Kadison Singer problem~\cite{MarcusSpielmanSrivastavaKS,jourdan2022algorithmic} as well as the Santa Claus allocation problem~\cite{AsadpourSaberiMaxMin,AsadpourFeigeSaberiHypergraph,FeigeFairness,ChuzhoyMaxMin,DaviesRothvossZhangMatroidSanta} as a special case of the minimum eigenvalue problem.

The ``discrepancy'' formulation of the Kadison-Singer problem (shown to be equivalent to the original formulation in~\cite{weaver2004kadison} and proved in~\cite{MarcusSpielmanSrivastavaKS}) states that given a set of vectors $v_1, \ldots, v_m \in \R^d$ with $\norm{v_i} \leq \alpha $ and $\sum_i v_iv_i^\top = I_d$, there exists a partition of $[m]$ into two subsets $S_1, S_2$, such that for every $j$, $\sum_{i \in S_j} v_i v_i^\top$ spectrally approximates $I_d/2$ to an additive factor of $O(\alpha)$. Algorithmically, finding such a partition is equivalent to solving an instance of the minimum eigenvalue maximization problem under partition matroid constraints (see Section~\ref{sec:ks}).

Another classical application of the minimum eigenvalue problem arises in the area of optimal design of experiments in statistics~\cite{pukelsheim2006optimal,Allen-Zhu2017}. The goal in the design of experiments is to select a subset of vectors $S$ from a given list of vectors $\{v_1,\ldots, v_n\}$ such that certain measures of the covariance matrix $\left(\sum_{i\in S} v_i v_i^{\top}\right)^{-1}$ are small. In particular, minimizing the maximum eigenvalue of the covariance matrix, classically known as the $E$-design problem in statistics, is exactly the minimum eigenvalue problem. While much of the previous work has focused on the case when the selected set of measurements $S$ must satisfy cardinality constraints, our work generalizes this problem to be studied under general matroid constraints. 

\subsection{Our Results and Contributions}
In this work, we present an approximation algorithm for the minimum eigenvalue problem for all matroids. We use the randomized rounding technique of pipage rounding to give a  polynomial time approximation scheme (PTAS) when the dimension is constant. 
\begin{theorem} \label{thm:min_lambda}
    For any $\epsilon > 0$ there is an $O\left( n^{O(d\log(d)/\epsilon^2)}\right)$-time algorithm which, given a collection of vectors $v_1,\ldots, v_n \in \R^d$ and a matroid $\M = ([n], \I)$ returns a set $B \in \I$ such that with probability at least $1-d^{-4}$
    \[ \lambda_{\min}\left(\sum_{i\in B} v_iv_i^\top\right) \geq (1-\epsilon) \cdot \max_{B^\star \in \I} \; \lambda_{\min}\left(\sum_{i\in B^\star} v_iv_i^\top\right).\]
\end{theorem}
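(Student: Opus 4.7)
The plan is to combine a convex relaxation on the matroid base polytope with pipage rounding, controlled by a guessing step that fixes a good rescaling of the vectors. The natural relaxation
\[
\max \; \lambda_{\min}\!\left(\sum_{i=1}^n x_i\, v_i v_i^\top\right) \quad\text{subject to } x \in P(\M),
\]
where $P(\M)$ is the base polytope of $\M$, is a concave maximization and hence solvable in polynomial time via the ellipsoid method, and its optimum upper-bounds $\OPT$. The trouble with rounding a fractional optimum directly is that matrix-concentration inequalities require each summand $x_i v_i v_i^\top$ to be spectrally small compared to the target matrix $M^\star := \sum_{i\in B^\star} v_i v_i^\top$, which need not hold for arbitrary $v_i$ in an arbitrary geometry.

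To enforce such a norm bound, I would enumerate over all independent sets $S\in\I$ of size $t = O(d\log d/\epsilon^2)$ a candidate ``scaffolding'' guessed to be a subset of some optimal basis $B^\star$. For a fixed guess, contract $S$ in $\M$, and rescale the remaining vectors by $M_S^{-1/2}$ where $M_S := \sum_{i\in S} v_iv_i^\top$; the goal reduces to finding a basis of $\M/S$ that maximizes $\lambda_{\min}(I + \sum_{i\in B\setminus S}\tilde v_i\tilde v_i^\top)$ for $\tilde v_i = M_S^{-1/2}v_i$. The key structural lemma to prove is that there exists a scaffolding $S\subseteq B^\star$ of size $t$ for which every rescaled vector $\tilde v_i$, $i\in B^\star\setminus S$, has squared norm $\|\tilde v_i\|^2 \le \epsilon^2/(d\log d)$. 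I would establish this by a leverage-score sampling argument inside $B^\star$: picking $S$ proportional to the leverage scores $v_i^\top (M^\star)^{-1} v_i$ and applying matrix Chernoff shows that after $O(d\log d/\epsilon^2)$ picks the residual leverage scores (now measured in the geometry of $M_S$) drop to the required level, so such an $S$ exists and is found by the brute enumeration.

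Given the correct scaffolding, I would solve the rescaled relaxation on the base polytope of $\M/S$ and apply pipage rounding to the fractional optimum. Pipage rounding yields an integer basis of $\M/S$ whose distribution satisfies the negative-correlation conditions under which a matrix Chernoff inequality applies. Using the uniform spectral norm bound $\|\tilde v_i\tilde v_i^\top\|_{\mathrm{op}}\le \epsilon^2/(d\log d)$ guaranteed by the structural lemma, matrix Chernoff implies that the rounded matrix is within a $(1-\epsilon)$ factor of the fractional objective in Loewner order, with failure probability at most $d\cdot \exp(-\Omega(\log d)) = d^{-4}$ for an appropriate constant in the exponent. Translating through the rescaling gives $\lambda_{\min}$ at least $(1-\epsilon)\OPT$ for that guess, and the algorithm outputs the best set over all guesses.

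The main obstacle is the structural lemma itself: it has to certify that a small, enumerable scaffolding drives down leverage scores in a geometry determined by the \emph{unknown} matrix $M^\star$, so that the enumeration does not need to know $M^\star$ to succeed. A secondary technical hurdle is to verify the precise matrix analogue of the standard Ageev--Sviridenko pipage analysis: the negative-correlation / concavity conditions on the base polytope must combine with the uniform norm bound from the scaffolding in exactly the form required by matrix Chernoff, and care is needed because the objective $\lambda_{\min}$ is only concave in the PSD ordering, not coordinate-wise.
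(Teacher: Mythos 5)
Your high-level architecture matches the paper's: guess a small set $S$ of size $O(d\log d/\epsilon^2)$, use $A_S=\sum_{i\in S}v_iv_i^\top$ to define a rescaling, solve a convex relaxation over the matroid base polytope, and round with pipage rounding plus matrix Chernoff. However, there are two genuine gaps. First, your structural lemma only bounds $\|\tilde v_i\|^2 = v_i^\top A_S^{-1}v_i$ for $i\in B^\star\setminus S$, but the fractional optimum of your relaxation on $\M/S$ may place mass on vectors \emph{outside} $B^\star$ whose rescaled norm is large; then the parameter $R$ in matrix Chernoff is uncontrolled and the rounding argument collapses. The fix --- and the point the paper treats as central --- is to add the constraint $x_i=0$ for every $i\notin S$ with $v_i^\top A_S^{-1}v_i>\epsilon^2/(10\log d)$; this is still a valid relaxation precisely because the structural lemma guarantees no such vector lies in $B^\star\setminus S$, and it is computable because the geometry is $A_S$, not $M^\star$. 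Your proposal never imposes this exclusion.

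Second, your proposed proof of the structural lemma (sample $S$ proportional to leverage scores w.r.t.\ $M^\star$, apply matrix Chernoff) does not go through as stated: matrix Chernoff for leverage-score sampling controls the \emph{reweighted} sum $\sum_{i\in S}\frac{1}{tp_i}v_iv_i^\top$, whereas you need a lower bound on the \emph{unweighted} $A_S$ in the $M^\star$ geometry to conclude anything about $A_S^{-1}$; for low-leverage vectors the weights $1/(tp_i)$ blow up, so no multiple of $M^\star$ is certified as a lower bound on $A_S$. The paper instead proves existence deterministically: take $S\subseteq T$ of size $\ell=10d\log(d)/\epsilon^2+d-1$ that is locally optimal for determinant maximization under single-element swaps; the matrix determinant lemma plus $\sum_{i\in S}v_i^\top A_S^{-1}v_i=d$ and $\sum_{i\in S}(v_i^\top A_S^{-1}v_j)^2=v_j^\top A_S^{-1}v_j$ yield $v_j^\top A_S^{-1}v_j\le d/(\ell-d+1)$ for all $j\in T\setminus S$. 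You should adopt that argument (or a correct volume-sampling variant) rather than the sampling sketch. Finally, a smaller but real point: ``pipage rounding gives negative correlation, hence matrix Chernoff applies'' is not a theorem --- the paper needs the Harvey--Olver route of showing Tropp's trace-exponential pessimistic estimator is concave under swaps (via a Lieb-type concavity lemma), and only the upper tail is proved there, so the lower-tail version you need requires its own verification.
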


Our result generalizes to give a PTAS (for constant dimension) when the objective is a general matrix function satisfying certain technical properties. In particular, this implies that a similar result as in Theorem~\ref{thm:min_lambda} is achievable when the objective is to maximize the determinant of $\sum_{i\in B} v_i v_i^\top$ or to minimize any norm of the eigenvalues of $(\sum_{i\in B} v_i v_i^\top)^{-1}$. 
\begin{theorem}\label{thm:general_f}
Suppose we have a collection of vectors $\mathcal{V} = (v_1,\ldots, v_n) \in \R^d$, and a matroid $\M = ([n], \I)$. Let $f: \mathbb{S}_{d}^+: \rightarrow \mathbb{R}$ be a concave, monotone, and homogeneous function given with a value and first order oracle.
    For any $\epsilon > 0$, there is an $O\left( n^{O(d\log(d)/\epsilon^2)}\right)$-time randomized algorithm, which takes $(\mathcal{V}, \mathcal{M}, f)$ as input and returns a set $B \in \I$ such that with probability at least $1-d^{-4}$,
    \begin{equation*}
        f\left(\sum_{i\in B} v_iv_i^\top\right) \geq (1-\epsilon) \cdot \max_{B^\star \in \I} \; f\left(\sum_{i\in B^\star} v_iv_i^\top\right) \,.
    \end{equation*}
\end{theorem}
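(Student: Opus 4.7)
The plan is to follow the template of Theorem \ref{thm:min_lambda}: its rounding procedure in fact produces a set $B$ whose matrix $\sum_{i\in B} v_iv_i^\top$ is \emph{spectrally} close to the fractional relaxation's matrix, and monotonicity plus homogeneity of $f$ upgrade this spectral closeness to an approximation guarantee on $f$ essentially for free.

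Concretely, I first write the convex relaxation
\[
\max_{x \in P(\M)}\; f\!\left(\sum_{i=1}^n x_i\, v_i v_i^\top\right),
\]
where $P(\M)$ denotes the matroid base polytope. The objective is concave in $x$ (the concave $f$ composed with a linear map), and the supplied first-order oracle for $f$ immediately yields a first-order oracle for the composed objective, so any standard convex solver returns a fractional $x^\star$ with $M^\star := \sum_i x_i^\star v_i v_i^\top$ satisfying $f(M^\star) \geq (1-\epsilon/3)\cdot \OPT$; concavity of $f$ together with integrality of the matroid polytope certifies that the fractional optimum upper-bounds $\OPT$. The key reduction is that if the rounded integral $B \in \I$ satisfies the \emph{spectral} bound $\sum_{i \in B} v_iv_i^\top \succeq (1-\epsilon/3)\, M^\star$, then by monotonicity followed by homogeneity,
\[
f\!\left(\sum_{i \in B} v_iv_i^\top\right) \;\geq\; f\bigl((1-\epsilon/3)\, M^\star\bigr) \;=\; (1-\epsilon/3)\, f(M^\star) \;\geq\; (1-\epsilon)\,\OPT.
\]

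After the change of variables $u_i := (M^\star)^{-1/2} v_i$, the spectral target becomes $\sum_{i \in B} u_i u_i^\top \succeq (1-\epsilon/3)\, I$ on the range of $M^\star$, which is exactly the minimum-eigenvalue-approximation statement established in Theorem \ref{thm:min_lambda}. I would therefore invoke its structural lemma to enumerate the $n^{O(d\log d/\epsilon^2)}$ candidate rescalings, swap-round the remaining small-magnitude fractional variables while preserving the matroid constraint, and apply the matrix Chernoff inequality to certify the required spectral inequality with probability at least $1-d^{-4}$.

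The main obstacle I expect is degeneracy of $M^\star$: if the fractional optimum of $f$ is supported on a proper subspace of $\R^d$, then $(M^\star)^{-1/2}$ must be interpreted as a pseudoinverse, and one needs monotonicity to argue that no candidate $B$ can improve the $f$-value by placing mass in directions outside $\mathrm{range}(M^\star)$. A secondary technicality is the degree of homogeneity: if $f$ is homogeneous of degree $k$ (with $k \leq d$ in all the applications of interest), the multiplicative factor becomes $(1-\epsilon/3)^k \geq 1 - k\epsilon/3$, which is absorbed into $\epsilon$ at no asymptotic cost. Beyond these points, every step reuses the Theorem \ref{thm:min_lambda} machinery unchanged — the only genuinely new ingredient is the monotonicity-plus-homogeneity conversion of the spectral guarantee into an $f$-guarantee.
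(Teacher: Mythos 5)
Your central new ingredient --- using monotonicity to pass from the spectral bound $\sum_{i\in B} v_iv_i^\top \succeq (1-\epsilon)M^\star$ to $f(\sum_{i\in B} v_iv_i^\top) \geq f((1-\epsilon)M^\star)$ and homogeneity to pull out the factor $(1-\epsilon)$ --- is exactly the argument the paper uses, and your remarks on degenerate $M^\star$ and higher-degree homogeneity are reasonable. However, the order of operations in your reduction has a genuine gap. You solve the \emph{unconstrained} relaxation $\max_{x\in\mathcal{P}(\M)} f(\sum_i x_i v_iv_i^\top)$ first, obtain $M^\star$, rescale by $(M^\star)^{-1/2}$, and only then invoke the structural lemma and the enumeration. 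This cannot work as stated: the unconstrained fractional optimum may place weight on ``long'' vectors that belong to no good basis, and then no rounding of $x^\star$ can certify $\sum_{i\in B} v_iv_i^\top \succeq (1-\epsilon)M^\star$ --- this is precisely the integrality gap instance in Appendix~\ref{sec:integrality_gap}, where $f=\lambda_{\min}$, $f(M^\star)=1/2$, but every basis has value $0$. Relatedly, your claim that the rescaled spectral target ``is exactly the statement established in Theorem~\ref{thm:min_lambda}'' is not a valid reduction: applied to the vectors $u_i=(M^\star)^{-1/2}v_i$, Theorem~\ref{thm:min_lambda} only approximates the best \emph{integral} basis of that instance, whose value can be $0$ even though the fractional value is $1$.

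The fix, which is what the paper does, is to put the guess inside the convex program rather than after it: enumerate all candidate sets $S$ of size $O(d\log d/\epsilon^2)$ \emph{first}, and for each guess solve the program~\ref{eq:matroid_convex_program} with objective $f$ and the hard constraints $x_i=1$ for $i\in S$ and $x_i=0$ for $i\in L(S)$. Lemma~\ref{lem:local_search}, applied to the unknown optimal basis $B^\star$, guarantees that for at least one guess $S\subseteq B^\star$ the constrained program is still a relaxation (since $B^\star\cap L(S)=\emptyset$), and for that guess every vector in the support of $x^\star$ automatically has leverage at most $\epsilon^2/(10\log d)$ with respect to $X=\sum_i x_i^\star v_iv_i^\top \succeq A_S$ (after splitting the vectors of $S$ into short copies). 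Only then does the concentration step go through. One further point: for general matroids the rounding is dependent swap/pipage rounding, so a plain matrix Chernoff bound does not apply; you need the lower-tail pessimistic-estimator bound of Lemma~\ref{lem:renormalized-pipage} (concavity under swaps of the Tropp estimator), which your appeal to ``the matrix Chernoff inequality'' glosses over.
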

Although Theorem~\ref{thm:general_f} is stated in terms of maximizing concave functions, our algorithm can also be applied to minimize monotone and homogeneous convex functions (e.g., $\mathrm{trace}(\sum_{i\in B} v_i v_i^\top)^{-1}$) by considering the natural convex relaxation of the function over the matroid base polytope and using the same rounding strategy. 
\paragraph{Technical Overview.} The first natural direction is to construct a convex programming relaxation for the problem and aim to apply randomized rounding methods to it. 
\begin{equation*} \label{eq:cvx_natural} \tag{CP} 
    \begin{array}{cl}
        \max & \lambda_{\min}\left( X\right) \\
        & X = \sum\limits_{i=1}^n x_{i} \cdot v_{i}v_{i}^\top \\
        & x \in \mathcal{P}(\M) \\
        & x \geq 0
    \end{array}
\end{equation*}
Here, $\mathcal{P}(\M) $ denotes the matroid base polytope of $\mathcal{M}$.
 Unfortunately, this direct approach faces problems as this natural relaxation has an unbounded integrality gap even in very special cases (see Appendix~\ref{sec:integrality_gap}). The main challenge is the presence of \emph{long} vectors that contribute significantly towards the optimum solution. A natural way to formalize the contribution of a vector is to consider its \emph{leverage score}. Indeed, if $T$ denotes the optimum solution and $A_T=\sum_{i\in T}v_i v_i^{\top}$, let $l_i=v_i^{\top} A_T^{-1} v_i $  be the leverage score of $v_i$ and let $S=\{i\in T: l_i\geq \frac{\epsilon^2}{\log d}\}$ be the set of vectors in the optimum solution with large leverage scores.  The bound  $\frac{\epsilon^2}{\log d}$ is chosen to allow randomized rounding methods to work (see Lemma~\ref{cor:flexible-matrix-chernoff} for details). Using the simple fact that the sum of leverage scores of all vectors in the optimum solution is exactly $d$, it follows that $|S|\leq \frac{d \log d}{\epsilon^2}$. Thus we could easily enumerate all such subsets $S$ in time $n^{O(d \log d/\epsilon^2)}$. For each such guess $S$, we include $S$ in our solution and solve the convex program. We then apply the randomized rounding method to the solution of the convex program. Unfortunately, the challenge lies in ensuring that the convex program not only selects the vectors in $S$ (this can be easily done by setting their indicator variable to one) but also avoids selecting all vectors \emph{not} in $T$ that have a large leverage score. The latter is crucial for the randomized rounding approach to work effectively. Unfortunately, since we did not guess $T$, we have no way to insist that we do not pick these vectors in the convex program. 

To address this problem, we present a new structural lemma that enables us to compute the leverage score as given by matrix $A_S=\sum_{i\in S} v_i v_i^T$. The lemma shows that there are few vectors with large leverage scores, even when using $A_S$ instead of $A_T$. Observe that $A_S^{-1}\succeq A_T^{-1}$ and therefore, the leverage scores with respect to $A_S$ are larger. Nevertheless, we still show a similar bound in the following lemma. 

\begin{restatable}[]{lemma}{localsearch}\label{lem:local_search}
    For any set $T$ and a set of vectors $\{v_i:i\in T\}$ in $\mathbb{R}^d$ such that $\sum_{i \in T} v_i v_i^\top$ is invertible, there exists a subset $S \subseteq T$ such that $|S| = O( d\log(d) /\epsilon^2 )$, $A_S = \sum_{i \in S} v_i v_i^\top$ is invertible, and for all $i \in T\del S$,
    \begin{equation*}
        v_i^\top A_S^{-1} v_i \leq \frac{\epsilon^2}{10\log(d)}\,.
    \end{equation*}
\end{restatable}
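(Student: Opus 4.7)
The plan is a one-shot existence argument exploiting the local optimality of a maximum-determinant subset of the right size. Set $m = \min\bigl(|T|,\; d + \lceil 10\,d\log(d)/\epsilon^2 \rceil\bigr)$ and let $S \subseteq T$ be any size-$m$ subset maximizing $\det(A_S)$, where $A_S := \sum_{i \in S} v_i v_i^\top$. Since $A_T$ is invertible, $T$ contains $d$ linearly independent vectors; any size-$m$ subset containing them has $\det > 0$, so any maximizer $S$ has $A_S$ invertible. If $m = |T|$ then $T\setminus S = \emptyset$ and we are done, so assume $m = d + \lceil 10\,d\log(d)/\epsilon^2 \rceil$.

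The heart of the proof is a swap inequality. Fix $i \in T \setminus S$ and $j \in S$ and consider $S' = (S \setminus \{j\}) \cup \{i\}$. Writing $A_{S'} = A_S + [v_i,\ v_j]\,\mathrm{diag}(1,-1)\,[v_i,\ v_j]^\top$ and applying the generalized matrix determinant lemma gives
\[
    \det(A_{S'}) \;=\; \det(A_S)\,\Bigl[(1+\ell_i)(1-\ell_j) + c_{ij}^2\Bigr],
\]
where $\ell_i = v_i^\top A_S^{-1} v_i$, $\ell_j = v_j^\top A_S^{-1} v_j$, and $c_{ij} = v_i^\top A_S^{-1} v_j$. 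Note $\ell_j \in [0,1]$ because $v_j v_j^\top \preceq A_S$. Maximality of $S$ forces the bracketed factor to be at most $1$, which simplifies to
\[
    \ell_i(1 - \ell_j) + c_{ij}^2 \;\le\; \ell_j.
\]

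Now sum this inequality over all $j \in S$ and use the two trace identities $\sum_{j \in S} \ell_j = \mathrm{tr}(A_S^{-1} A_S) = d$ and $\sum_{j \in S} c_{ij}^2 = v_i^\top A_S^{-1}\bigl(\sum_{j \in S} v_j v_j^\top\bigr) A_S^{-1} v_i = \ell_i$. The sum collapses to
\[
    (|S| + 1 - d)\,\ell_i \;\le\; d,
\]
so $\ell_i \le d/(|S|+1-d) \le \epsilon^2/(10\log d)$ by our choice of $m$, for every $i \in T \setminus S$. The size bound $|S| = O(d\log(d)/\epsilon^2)$ is immediate from the definition of $m$.

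I do not expect a genuine obstacle: once one recognizes that local optimality of the determinant under single swaps yields a per-$j$ inequality that telescopes via these two trace identities, the rest is routine linear algebra. The step most deserving of care is the signed rank-2 determinant computation above; alternatively one can perform the swap as two successive rank-1 updates (first deleting $v_j$ via the matrix determinant lemma, then inserting $v_i$), which avoids invoking the signed-diagonal form of the lemma and makes the degenerate case $\ell_j = 1$ easy to handle by continuity.
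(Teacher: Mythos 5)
Your proof is correct and follows essentially the same route as the paper: both derive the swap inequality $\ell_i(1-\ell_j)+c_{ij}^2\le \ell_j$ from (local) optimality of the determinant under single-element exchanges, then sum over the removed element using $\sum_{j\in S}\ell_j=d$ and $\sum_{j\in S}c_{ij}^2=\ell_i$ to get $\ell_i\le d/(|S|+1-d)$. The only cosmetic difference is that you take a global determinant maximizer of fixed size while the paper takes a locally optimal point of the corresponding local-search process; the resulting inequality and conclusion are identical.
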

With the help of Lemma~\ref{lem:local_search}, we can now guess the set $S$ and insist that the convex program includes all these vectors in the chosen subset. More importantly, it allows us to insist that all vectors $v_i$ not in $S$ such that  $v_i^\top A_S^{-1} v_i > \frac{\epsilon^2}{10\log(d)}$ not be included in the chosen solution. The last step can be done since we have guessed the set $S$. This allows us to apply the randomized rounding approach to the convex programming solution.

There are some points worth mentioning about the randomized rounding approach. When the constraint matroid is a partition matroid, randomized rounding is a natural choice: for each part, the convex programming solution can be interpreted as a probability distribution over vectors in that part. Independently, for each part, pick one of the vectors with probability given by the convex programming solution. A simple application of the matrix Chernoff bound and the fact that leverage scores are all small due to Lemma~\ref{lem:local_search} gives us the desired result. Due to the simplicity of the approach for partition matroids as well as the applicability of these constraints, we first prove the result for partition matroids in Section~\ref{sec:partition}. We also show the application of our result to obtain an algorithmic version of the Kadison-Singer problem~\cite{MarcusSpielmanSrivastavaKS} for constant dimension. We slightly improve the run time compared to the recent work~\cite{jourdan2022algorithmic}.

\begin{corollary}\label{cor:ks_alg}
    Suppose we are given collection of vectors $\mathcal{U} = (u_1,\ldots, u_n) \in \R^d$ with $\|u_i\|^2 \leq \alpha$ for any $i \in [n]$ and $\sum_{i=1}^n u_iu_i^\top = I_d$ and a constant $c > 0$ such that there exists a set $T^*$ satisfying
    \begin{equation*} 
        \left(\frac{1}{2} - c\sqrt{\alpha}\right)I_d \preceq \sum_{i \in T^*} u_iu_i^\top \preceq \left(\frac{1}{2} + c\sqrt{\alpha}\right)I_d \,.
    \end{equation*}
     For any $\epsilon > 0$, there exists a randomized algorithm such which given $\mathcal{U}$ and $c$ as input, returns a set $T$ such that 
    \begin{equation*} 
        (1-\epsilon)\cdot \left(\frac{1}{2} - c\sqrt{\alpha}\right)I_d \preceq \sum_{i \in T} u_iu_i^\top \preceq (1+\epsilon)\cdot \left(\frac{1}{2} + c\sqrt{\alpha}\right)I_d \,,
    \end{equation*}
    with probability at least $1-O(d^{-4})$. The run time of the algorithm is $O(n^{O(d\log{d}/\epsilon^2)})$.
\end{corollary}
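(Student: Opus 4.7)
The plan is to reduce finding the desired partition to a single application of Theorem~\ref{thm:min_lambda} over a partition matroid in dimension $2d$. For each $u_i\in\R^d$ I would create two lifted vectors in $\R^{2d}$, namely $v_{i,1} = (u_i,0)$ and $v_{i,2}=(0,u_i)$, and form a partition matroid on $[2n]$ whose parts are $\{v_{i,1},v_{i,2}\}$ for $i=1,\dots,n$, with the requirement of choosing exactly one vector from each part. Any basis $B$ then encodes a subset $T = \{i : v_{i,1}\in B\}\subseteq[n]$, and $\sum_{(i,j)\in B} v_{i,j}v_{i,j}^\top$ is block diagonal with blocks $\sum_{i\in T} u_iu_i^\top$ and $\sum_{i\notin T} u_iu_i^\top$. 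The minimum eigenvalue of this block-diagonal matrix is exactly the minimum of the minimum eigenvalues of the two blocks.

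I would first show that the optimum value of this minimum-eigenvalue problem is at least $\tfrac{1}{2}-c\sqrt{\alpha}$. Since $\sum_{i=1}^n u_iu_i^\top = I_d$, the complement matrix $\sum_{i\notin T^*} u_iu_i^\top = I_d - \sum_{i\in T^*} u_iu_i^\top$ inherits the same two-sided spectral bounds as $\sum_{i\in T^*} u_iu_i^\top$. Hence both blocks of the block-diagonal matrix associated with the basis $\{v_{i,1}:i\in T^*\}\cup\{v_{i,2}:i\notin T^*\}$ have minimum eigenvalue at least $\tfrac{1}{2}-c\sqrt{\alpha}$, and so does the whole block-diagonal matrix.

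Next I would apply Theorem~\ref{thm:min_lambda} to the constructed instance of $2n$ vectors in $\R^{2d}$ with accuracy parameter $\epsilon$. This returns, with probability at least $1-(2d)^{-4}=1-O(d^{-4})$, a basis $B$ whose induced block-diagonal matrix has minimum eigenvalue at least $(1-\epsilon)(\tfrac{1}{2}-c\sqrt{\alpha})$, and the run time is $O((2n)^{O(2d\log(2d)/\epsilon^2)}) = O(n^{O(d\log d/\epsilon^2)})$. Letting $T\subseteq[n]$ be the corresponding subset, both $\sum_{i\in T}u_iu_i^\top$ and $\sum_{i\notin T}u_iu_i^\top$ then have minimum eigenvalue at least $(1-\epsilon)(\tfrac{1}{2}-c\sqrt{\alpha})$, which is the desired lower bound.

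To obtain the upper bound I would use the identity $\sum_{i\in T}u_iu_i^\top = I_d - \sum_{i\notin T}u_iu_i^\top$, which yields $\lambda_{\max}(\sum_{i\in T} u_iu_i^\top) = 1-\lambda_{\min}(\sum_{i\notin T} u_iu_i^\top) \leq 1-(1-\epsilon)(\tfrac{1}{2}-c\sqrt{\alpha}) = \tfrac{1}{2}+c\sqrt{\alpha}+\epsilon(\tfrac{1}{2}-c\sqrt{\alpha}) \leq (1+\epsilon)(\tfrac{1}{2}+c\sqrt{\alpha})$, as required. The main (and essentially only) conceptual step is the block-diagonal lifting combined with the partition matroid, which converts the two-sided Kadison-Singer requirement into a single one-sided minimum-eigenvalue objective; once this encoding is in place the corollary follows directly from Theorem~\ref{thm:min_lambda}.
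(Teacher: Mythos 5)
Your proposal is correct and follows essentially the same route as the paper: the same block-diagonal lifting $v_{i,1}=(u_i,0)$, $v_{i,2}=(0,u_i)$ into $\R^{2d}$ with a partition matroid forcing one vector per pair, followed by an application of the partition-matroid algorithm. Your explicit derivation of the upper bound via $\lambda_{\max}\bigl(\sum_{i\in T}u_iu_i^\top\bigr)=1-\lambda_{\min}\bigl(\sum_{i\notin T}u_iu_i^\top\bigr)\leq(1+\epsilon)\bigl(\tfrac12+c\sqrt{\alpha}\bigr)$ is a slightly more careful rendering of the step the paper handles by appealing to the equivalence of its conditions~\eqref{eq:e1} and~\eqref{eq:e2}.
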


For general matroids, a straightforward application of randomized rounding does not work since it will not ensure that the chosen set is an independent set in the matroid. Instead, we use \emph{pipage rounding} for general matroids, which involves randomly walking in the matroid polytope to return a vertex while ensuring that the output solution has even better concentration than is given by independent randomized rounding. To show these concentration results, we build on the work of Harvey and Olver~\cite{harvey2014pipage} and give lower tail bounds on the distribution obtained via pipage rounding in Lemma~\ref{lem:renormalized-pipage}.

\subsection{Related Work}
The minimum eigenvalue problem with partition constraints can be interpreted as a generalization of the max-min allocation problem. In the case of cardinality constraints, it can also model problems from experimental design and spectral sparsification. We give an overview of prior work for these special cases.

\paragraph{Max-min allocation and Santa Claus:} In the max-min allocation problem, we are given a set $[d]$ of agents and a set $[n]$ of items where agent $j \in [d]$ has valuation $h_{ij}\geq 0$ for item $i$. The goal is to select an assignment $\sigma: [n] \rightarrow [d]$ which maximizes
\[ \min_{j \in [d]} \sum_{i:\sigma(i) = j} h_{ij}.\]
This can be seen as a special case of the minimum eigenvalue problem with partition constraints. 

Bansal and Sviridenko~\cite{Bansal2006} introduced the configuration LP as a relaxation for the max-min allocation problem but showed that it has an integrality gap of $\Omega(\sqrt{n})$~\cite{Bansal2006}. Asadpour and Saberi~\cite{AsadpourSaberiMaxMin} gave a rounding scheme for the same LP, which achieves an $O(\sqrt{n}\log^3 n)$-approximation. This was later improved by Chakrabarty et al.~\cite{ChuzhoyMaxMin} to an $\tilde O(n^\epsilon)$-approximation for any $\epsilon \in \Omega(\log\log n / \log n)$ by iteratively constructing new instances with smaller integrality gap.

A further special case is the Santa Claus problem where each item $i$ has an intrinsic value $H_i \geq 0$ such that $h_{ij} \in \{0,H_i\}$ for all players $j \in [d]$. Here, Bansal and Sviridenko~\cite{Bansal2006} used the configuration LP to find an $O(\log\log n /\log\log\log n)$-approximation. Feige~\cite{FeigeFairness} non-constructively showed a constant upper bound on the integrality gap of the configuration LP for the Santa Claus problem by iteratively applying the Lov{\'a}sz Local Lemma. The current best bound is due to Haxell and Szab\'o~\cite{Haxell2023}, who used new topological techniques to show that the integrality gap is at most $3.534$. Bounds on the integrality gap do not immediately lead to efficient approximation algorithms, but Davies et al.~\cite{DaviesRothvossZhangMatroidSanta} recently gave an algorithm for a more general setting that can be used to achieve a $(4+\epsilon)$-approximation for the Santa Claus problem.

\paragraph{Experimental Design (E-optimal Design):}

Even with cardinality constraints (uniform matroid of rank $k$), the minimum eigenvalue problem is NP-hard~\cite{e-design-hard}. Allen-Zhu et al.~\cite{Allen-Zhu2017} showed that it is possible to deterministically find a $(1-\epsilon)$-approximation so long as $k \geq \Omega(d/\epsilon^2)$ by rounding the natural convex relaxation. They also conjectured that this requirement was necessary. This conjecture was confirmed in~\cite{a-design-hard}, where they showed an integrality gap instance for the convex relaxation. Recently Lau and Zhou~\cite{LauZhouLocalSearch} have built on the regret minimization framework from~\cite{Allen-Zhu2017} to show that a modified local search algorithm with a ``smoothed" objective works as long as there is a near-optimal solution with a good condition number.

\paragraph{Spectral Sparsification and Kadison-Singer.} The problem of rounding the natural convex programming relaxation for the minimum eigenvalue problem is closely related to spectral sparsification~\cite{Batson2008} and the Kadison-Singer problem~\cite{MarcusSpielmanSrivastavaKS}. In spectral sparsification~\cite{Batson2008}, the goal is to pick a small subset of vectors $S\subseteq [n]$ such that $\sum_{i\in S} w_i v_i v_i^{\top}$ spectrally approximates $\sum_{i\in [n]} v_i v_i^{\top}$ for some weights $w_i$. In the cardinality constrained minimum eigenvalue problem, rounding the convex programming solution involves finding a small set $S$, such that $\sum_{i\in S} v_i v_i^{\top}$ spectrally approximates $\sum_{i\in [n]} x_i v_i v_i^{\top}$, where the weights $x_i$ form the solution to the convex relaxation. Indeed~\cite{Allen-Zhu2017} essentially build on this connection to obtain their results for the $E$-design problem discussed earlier. The Kadison-Singer problem~\cite{MarcusSpielmanSrivastavaKS} is closely related to the minimum eigenvalue problem under a partition matroid constraint. We utilize this connection in Corollary~\ref{cor:ks_alg} to give an algorithmic version of the Kadison-Singer problem for constant dimensions. More generally, the Kadison-Singer problem can be reformulated as showing that the integrality gap of the natural relaxation of the minimum eigenvalue problem under partition matroid constraints is at most $1/(1 -\epsilon)$ if the length of each vector is at most $O(\epsilon)$. We discuss this connection in Section~\ref{sec:conclusion}. 

\section{The Algorithm for Partition Matroids}\label{sec:partition}
To highlight the main idea of our algorithm, we first prove Theorem~\ref{thm:min_lambda} for the special case of partition matroid. 
Let $\mathcal{M} = (E, \mathcal{I})$ be a partition matroid where $E = P_1 \cup \cdots \cup P_k$ be a disjoint union of parts with each part containing $n$ elements, and we have a collection of vectors $v_{ij}$ for $i \in [k]$ and $j \in P_i$. 
The goal is to select an element $\sigma(i) \in P_i$ for each $i$ to maximize
$\lambda_{\min}\left( \sum_{i=1}^k v_{i\sigma(i)}v_{i\sigma(i)}^\top \right).$

We can construct the natural convex relaxation of this problem as follows. For each $i \in [k]$ and $j \in P_i$, we add a decision variable $x_{ij}$ which represents whether we select the vector $v_j$ from part $P_i$, i.e., if $\sigma(i) = j$. Then we get the convex program
\begin{equation*}
    \begin{array}{cl}
        \max & \lambda_{\min}\left( X\right) \\
        & X = \sum\limits_{i=1}^k \sum\limits_{j\in P_i} x_{ij} \cdot v_{ij}v_{ij}^\top \\
        & \sum\limits_{j \in P_i} x_{ij} = 1, \quad \forall i \in [k]\\
        & x \geq 0
    \end{array}
\end{equation*}

The constraint $\sum_{j \in P_i} x_{ij} = 1$ ensures that we have a probability distribution over the possible assignments within each part in the optimal solution.

Given an optimal solution $x^\star$ with value $\OPT$, a natural rounding strategy is to round independently within each part. Following this rounding strategy, we get a rank $1$ random matrix $\bM_i$ for each part $P_i$ with
\begin{equation*}
    \Prob(\bM_i = v_{ij}v_{ij}^\top) = x^\star_{ij}\,, \quad \forall  j \in P_i \,.
\end{equation*}
The following matrix concentration inequality bounds the probability of failure of this rounding strategy.

\begin{theorem}~\cite[Theorem 5.1.1]{tropp2015introduction}
    Consider independent random matrices $\bM_1,\ldots, \bM_k \in \mathbb{S}_d^+$. Set 
    \[\mu_{\min} = \lambda_{\min}\left(\E\left[ \sum_{i=1}^k \bM_i \right]\right)\,.\]
    If $\lambda_{\max}(\bM_i) \leq R$ for all $i \in [k]$ a.s. then
    \[ \Prob\left(\lambda_{\min}\left( \sum_{i=1}^k \bM_i\right) < (1-\epsilon) \mu_{\min}\right) \leq d\cdot \exp\left(\frac{-\epsilon^2 \mu_{min}}{2R}\right).\]
\end{theorem}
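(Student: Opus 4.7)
The plan is to follow the standard matrix Laplace transform (Ahlswede--Winter / Tropp) approach for proving matrix Chernoff bounds. Set $Y = \sum_{i=1}^k \bM_i$ and fix $\theta > 0$. Since $\lambda_{\min}(Y) < t$ iff $\lambda_{\max}(-Y) > -t$ iff $\lambda_{\max}(e^{-\theta Y}) > e^{-\theta t}$, Markov's inequality gives
\begin{equation*}
\Prob\bigl(\lambda_{\min}(Y) < t\bigr) \;\leq\; e^{\theta t}\,\E\bigl[\lambda_{\max}(e^{-\theta Y})\bigr] \;\leq\; e^{\theta t}\,\E\bigl[\mathrm{tr}\, e^{-\theta Y}\bigr],
\end{equation*}
where the last inequality uses that $e^{-\theta Y} \succeq 0$ so its maximum eigenvalue is bounded by its trace. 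I will apply this with $t = (1-\epsilon)\mu_{\min}$.

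Next I would dispose of the sum inside the matrix exponential using Lieb's concavity theorem, which yields the Tropp subadditivity property: for independent PSD random matrices,
\begin{equation*}
\E\bigl[\mathrm{tr}\, e^{-\theta \sum_i \bM_i}\bigr] \;\leq\; \mathrm{tr}\,\exp\Bigl(\sum_{i=1}^k \log \E[e^{-\theta \bM_i}]\Bigr).
\end{equation*}
This is the main technical step and the one place where the scalar Chernoff proof does not transfer verbatim; I would either cite it or derive it by peeling off one $\bM_i$ at a time, using at each step the fact that $A \mapsto \mathrm{tr}\,\exp(H + \log A)$ is concave in $A \succ 0$ (Lieb) together with Jensen.

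With the MGF subadditivity in hand, I bound each factor $\E[e^{-\theta \bM_i}]$ using the scalar inequality $e^{-\theta x} \leq 1 - \frac{1-e^{-\theta R}}{R}\, x$ for $x \in [0,R]$ and $\theta>0$, promoted to an operator inequality via the transfer rule since $0 \preceq \bM_i \preceq R\,I$ almost surely. Taking expectations and using $\log(I - A) \preceq -A$ gives
\begin{equation*}
\log \E[e^{-\theta \bM_i}] \;\preceq\; -\tfrac{1-e^{-\theta R}}{R}\,\E[\bM_i].
\end{equation*}
Summing over $i$ and noting that $\mathrm{tr}\,\exp(-cZ) \leq d\,e^{-c\lambda_{\min}(Z)}$ for $Z \succeq 0$ and $c > 0$, I obtain
\begin{equation*}
\Prob\bigl(\lambda_{\min}(Y) < (1-\epsilon)\mu_{\min}\bigr) \;\leq\; d\cdot \exp\!\left(\theta(1-\epsilon)\mu_{\min} - \tfrac{1-e^{-\theta R}}{R}\,\mu_{\min}\right).
\end{equation*}

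Finally, I would optimize by choosing $\theta = -\log(1-\epsilon)/R > 0$, which reduces the right-hand side to $d\cdot \bigl[e^{-\epsilon}(1-\epsilon)^{-(1-\epsilon)}\bigr]^{\mu_{\min}/R}$, and then apply the elementary inequality $(1-\epsilon)^{1-\epsilon} \geq e^{-\epsilon + \epsilon^2/2}$ (valid for $0 \leq \epsilon < 1$) to conclude that the exponent is at most $-\epsilon^2 \mu_{\min}/(2R)$, giving the stated bound. The main obstacle is the Lieb-subadditivity step; everything else is a careful but routine combination of operator-monotone transfer rules, trace inequalities, and the scalar optimization familiar from classical Chernoff bounds.
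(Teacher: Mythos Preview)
Your proof sketch is correct and is precisely the standard matrix Laplace transform argument from Tropp's monograph. Note, however, that the paper does not prove this statement at all: it is quoted verbatim as \cite[Theorem~5.1.1]{tropp2015introduction} and used as a black box, so there is no ``paper's own proof'' to compare against---your write-up simply reproduces the proof that the cited reference contains.
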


If we round according to the optimal solution $x^\star$ then $\E\left[ \sum_{i=1}^k \bM_i \right] = \sum_{i=1}^k \sum_{j \in P_i} x_{ij}^\star v_{ij}v_{ij}^\top$.

So $\mu_{\min} = \OPT$, and since for our particular case $M_i$ are rank $1$, $R = \max_i \lambda_{\max}(M_i) = \max_{ij} \|v_{ij}\|^2$. To bound the failure probability, we want $R \approx \epsilon^2/\log(d)$, which in turn requires that $ \max_{ij} \|v_{ij}\|^2 = O(\epsilon^2/\log(d))$. This is a very strong assumption on an instance.

The plan is to ``guess" a suitable change of basis such that all the vectors in the support of our optimal solution have a small norm. This will be useful because of the following standard, but slightly more flexible, version of the preceding matrix concentration inequality.

\begin{corollary}\label{cor:flexible-matrix-chernoff}
    Consider independent random matrices $\bM_1,\ldots, \bM_k \in \mathbb{S}_d^+$ and let $A$ be an arbitrary positive definite matrix. Define 
    $\mu_{\min} := \lambda_{\min}\left(A^{-1/2}\E\left[ \sum_{i=1}^k \bM_i \right]A^{-1/2}\right).$
    If $\lambda_{\max}(A^{-1/2}\bM_iA^{-1/2}) \leq R$ for all $i \in [k]$ a.s. then
    \[ \Prob\left( \sum_{i=1}^k \bM_i \nsucceq (1-\epsilon) \mu_{\min}\cdot A \right) \leq d\cdot \exp\left(\frac{-\epsilon^2 \mu_{min}}{2R}\right).\]
\end{corollary}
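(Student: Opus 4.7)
The plan is to reduce the corollary to the preceding Tropp matrix Chernoff bound by a single linear change of variables: conjugating each $\bM_i$ by $A^{-1/2}$. Since $A$ is positive definite, $A^{-1/2}$ is well-defined and symmetric, so the transformed matrices $\bN_i := A^{-1/2} \bM_i A^{-1/2}$ remain independent and almost surely lie in $\mathbb{S}_d^+$. The strategy is to apply the cited theorem to $\{\bN_i\}$, then translate the resulting lower bound on $\lambda_{\min}(\sum_i \bN_i)$ into a Loewner-order statement about $\sum_i \bM_i$.

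The first step is checking the hypotheses of Tropp's theorem for the $\bN_i$. By linearity of expectation,
\[
\E\!\left[\sum_{i=1}^k \bN_i\right] \;=\; A^{-1/2}\,\E\!\left[\sum_{i=1}^k \bM_i\right] A^{-1/2},
\]
so $\lambda_{\min}\bigl(\E[\sum_i \bN_i]\bigr)$ is exactly the $\mu_{\min}$ defined in the corollary statement. The almost-sure bound $\lambda_{\max}(A^{-1/2}\bM_i A^{-1/2}) \le R$ is, by definition, identical to $\lambda_{\max}(\bN_i) \le R$. Hence the preceding theorem applies to $\{\bN_i\}$ with parameters $\mu_{\min}$ and $R$, yielding
\[
\Prob\!\left[\lambda_{\min}\!\left(\sum_{i=1}^k \bN_i\right) < (1-\epsilon)\mu_{\min}\right] \;\le\; d\cdot \exp\!\left(\frac{-\epsilon^2 \mu_{\min}}{2R}\right).
\]

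The final step is to identify the complementary event with the one stated in the corollary. The inequality $\lambda_{\min}\bigl(A^{-1/2}(\sum_i \bM_i) A^{-1/2}\bigr) \ge (1-\epsilon)\mu_{\min}$ is equivalent to $A^{-1/2}(\sum_i \bM_i) A^{-1/2} \succeq (1-\epsilon)\mu_{\min} I_d$. Conjugating both sides by $A^{1/2}$, which preserves the Loewner order because $A^{1/2}$ is symmetric positive definite and invertible, converts this into $\sum_i \bM_i \succeq (1-\epsilon)\mu_{\min} A$. Taking complements matches the event in the corollary's conclusion, so the two probabilities coincide and the bound transfers verbatim.

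I do not expect any real obstacle here: the statement is essentially a reparameterization of the preceding theorem, and the only things to verify are standard facts about conjugation by a positive-definite matrix, namely that it preserves positive semidefiniteness, preserves independence of the $\bM_i$, and is an order-isomorphism for the Loewner order.
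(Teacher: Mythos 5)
Your proposal is correct and follows essentially the same route as the paper: conjugate the $\bM_i$ by $A^{-1/2}$, observe that the event $\sum_i \bM_i \nsucceq (1-\epsilon)\mu_{\min} A$ is equivalent (by order-preservation of conjugation) to $\lambda_{\min}(\sum_i A^{-1/2}\bM_i A^{-1/2}) < (1-\epsilon)\mu_{\min}$, and apply the preceding matrix Chernoff bound to the conjugated matrices. No gaps.
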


Again, since $\bM_i$ is rank $1$ for our case, we have
$
    R = \max_{i\in [k]} \; \lambda_{\max}(A^{-1/2}\bM_iA^{-1/2}) = \max_{i,j} \; v_{ij}^\top A^{-1} v_{ij} \,.
$
So, to use this corollary, we first need to find a matrix $A$ such that $v_{ij}^\top A^{-1} v_{ij} = O(\epsilon^2/\log(d))$ for all $[i] \in [k], j \in P_i$. 
 We will only need to consider matrices $A$ of a specific form that uses the input vectors.

Given a subset $S\subseteq E$, we define $A_S := \sum_{(i,j) \in S} v_{ij}v_{ij}^\top$, and consider the set of long vectors in the norm induced by $A_S$:
$L(S) := \left\{ (i,j)\in E\del S : v_{ij}^\top A_S^{-1} v_{ij} > \frac{\epsilon^2}{10\log(d)} \right\}.$
For a fixed set $S$, the following convex program ensures that $S$ is included in the solution and no ``long" vectors from $L(S)$ are included in the solution.
\begin{equation*}
    \begin{array}{cl}
        \max & \lambda_{\min}\left( X\right) \\
        & X = \sum\limits_{i=1}^k \sum\limits_{j\in P_i} x_{ij} \cdot v_{ij}v_{ij}^\top \\
        & \sum\limits_{j \in P_i} x_{ij} = 1, \,\, \forall i \in [k]\\
        & x_{ij} = 0, \,\, \forall (i,j) \in L(S)\\
        & x_{ij} = 1, \,\, \forall (i,j) \in S\\
        & x \geq 0
    \end{array}\tag{CP(S)}\label{eq:partition_convex_program}
\end{equation*}
Because of the extra constraints excluding ``long'' vectors, we could now use the flexible matrix concentration inequalities to randomly round the optimal solution. 

But, it is not clear that there is a good choice of $S$ for which the convex program~\ref{eq:partition_convex_program} is still a relaxation of the original problem. Lemma~\ref{lem:local_search}, which we restate here for the reader's convenience, shows that there exists a suitable set $S$ that is not too large.

\localsearch*

The proof of this lemma is inspired by the local search algorithm of~\cite{madan2019}.

At first glance, it may not be apparent why a subset satisfying the conditions of Lemma~\ref{lem:local_search} should exist. However, in the proof, we show that any subset of $T$ that is locally optimal with respect to a local search criteria indeed satisfies the guarantees of Lemma~\ref{lem:local_search}.
\begin{proof}{(of Lemma~\ref{lem:local_search})}
    We consider the local search process of~\cite{madan2019}. Starting with a set $S$ of size $\ell$ such that $A = \sum_{i\in S} v_i v_i^\top$ is invertible, we apply the following update rule. For any $j \in T \backslash S$ and $i \in S$, if $ \det(A) < \det(A - v_iv_i^\top + v_jv_j^\top)$, update $S = \{S\backslash \{i\}\} \cup \{j\}$ and iterate.
    
    Let $S \subseteq T$ be a locally optimal (under single element swaps) solution for this process (such an $S$ corresponds to the locally optimal solution determinant maximization problem subject to the cardinality constraint $|S| \leq \ell$), and let $A = \sum_{i\in S} v_i v_i^\top$. More concretely, this means that for all $i \in S$ and $j \in T\del S$,
    \[ \det(A) \geq \det(A - v_iv_i^\top + v_jv_j^\top) \,.\]
    
    We calculate the determinant on the right-hand side using the matrix determinant lemma,
    \begin{align*}
        \det(A - v_iv_i^\top + v_jv_j^\top)
        &= \det\left( A + \bmat{v_i & v_j} \bmat{-v_i & v_j}^\top\right) = \det(A) \cdot \det\left( I_2 + \bmat{-v_i & v_j}^\top A^{-1} \bmat{v_i & v_j} \right) \\
        &= \det(A) \cdot \left( (1- v_i^\top A^{-1} v_i)(1 + v_j^\top A^{-1}v_j) + (v_i^\top A^{-1} v_j)^2 \right)\,.
    \end{align*}
     So local optimality implies that for every $i \in S$ and $j \notin S$, $(1-v_i^\top A^{-1} v_i)(1+v_j^\top A^{-1} v_j) + (v_i^\top A^{-1} v_j)^2 \leq 1$.
     Rearranging this inequality we get
    \begin{equation}
        v_j^\top A^{-1} v_j- (v_i^\top A^{-1} v_i) \cdot (v_j^\top A^{-1} v_j) +  (v_i^\top A^{-1} v_j)^2 \leq v_i^\top A^{-1} v_i \,. \label{eq:local_opt}
    \end{equation} 
    Note that $\sum_{i\in S} v_i^\top A^{-1} v_i = \langle A, A^{-1} \rangle = d$ and $\sum_{i\in S} (v_i^\top A^{-1} v_j)^2 = v_j^\top A^{-1} v_j$. So for a fixed $j \in T\backslash S$, summing equation~\eqref{eq:local_opt} over all $i \in S$ implies $\ell \cdot v_j^\top A^{-1} v_j - d \cdot v_j^\top A^{-1} v_j + v_j^\top A^{-1} v_j \leq d$.
    Rearranging, we see that for any $j \in T \backslash S$,
    \[ v_j^\top A^{-1} v_j \leq \frac{d}{\ell-d+1} = \frac{\epsilon^2}{10\log(d)}\,,\]
where the last equality follows by choosing $\ell = 10d\log(d)/ \epsilon^2 + d -1$.
\end{proof}

We will apply this lemma to the case when $T = \{ v_{i\sigma^\star(i)} : i \in [k]\}$m where $\sigma^*$ is the choice function that maximizes the minimum eigenvalue, i.e., when $T$ contains the vectors from an optimal integral assignment. In particular, we get the following corollary.

\begin{lemma}\label{lem:S_relaxation}
    There is a subset $S\subseteq E$ such that $|S| = O( d\log(d) /\epsilon^2 )$, and the convex program~\ref{eq:partition_convex_program} is a relaxation for the minimum eigenvalue problem.
\end{lemma}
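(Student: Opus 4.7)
The plan is to exhibit the desired $S$ by applying Lemma~\ref{lem:local_search} to an optimal integral assignment, and then to check that the indicator vector of that assignment is feasible for \ref{eq:partition_convex_program}.

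Let $\sigma^\star:[k]\to E$ be an optimal integral assignment with value $\OPT$, let $T = \{(i,\sigma^\star(i)) : i\in[k]\}$, and set $A_T = \sum_{(i,j)\in T} v_{ij}v_{ij}^\top$. We may assume $A_T$ is invertible; otherwise $\OPT = 0$, and since $\lambda_{\min}(X) \geq 0$ for any PSD $X$, the CP trivially satisfies the relaxation inequality once we fix any small $S$ with $A_S$ invertible. Given invertibility of $A_T$, Lemma~\ref{lem:local_search} produces a subset $S\subseteq T$ with $|S| = O(d\log(d)/\epsilon^2)$, $A_S$ invertible, and
\[ v_{ij}^\top A_S^{-1} v_{ij} \leq \frac{\epsilon^2}{10\log(d)} \quad \text{for every } (i,j)\in T\setminus S. \]

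I then verify that the indicator vector $x^T$ of $T$ is feasible for \ref{eq:partition_convex_program}. Since $\sigma^\star$ selects exactly one vector per part, $\sum_{j\in P_i} x^T_{ij} = 1$ for each $i\in[k]$. Since $S\subseteq T$, we have $x^T_{ij} = 1$ for every $(i,j)\in S$. Finally, every $(i,j)\in L(S)$ lies in $E\setminus S$ by definition of $L(S)$, and cannot lie in $T\setminus S$ because the displayed inequality would then be violated; hence $(i,j)\notin T$, so $x^T_{ij}=0$. Thus $x^T$ is feasible and attains objective value $\OPT$, which shows \ref{eq:partition_convex_program} is a relaxation of the minimum eigenvalue problem over the partition matroid.

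The only substantive step is producing the set $S$ of bounded size for which every remaining vector of the optimum has small leverage score against $A_S$; this is precisely Lemma~\ref{lem:local_search}, whose local-search / determinant-maximization argument supplies the required structural bound. Once that lemma is granted, Lemma~\ref{lem:S_relaxation} reduces to definition-chasing against the constraints of \ref{eq:partition_convex_program}, and I anticipate no further obstacle.
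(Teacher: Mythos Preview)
Your proposal is correct and is precisely the argument the paper intends: it states Lemma~\ref{lem:S_relaxation} as a direct corollary of Lemma~\ref{lem:local_search} applied to the optimal integral assignment $T=\{(i,\sigma^\star(i)):i\in[k]\}$, and the feasibility check you spell out is the natural verification left implicit there. The only caveat is that in the degenerate case $\OPT=0$ your ``any small $S$ with $A_S$ invertible'' needs to also be independent in the partition matroid (at most one element per part) for \ref{eq:partition_convex_program} to be feasible, but this is a trivial edge case.
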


As $d$ is a constant, the size of the set $S$ we search for is also constant. Thus, there are at most $O(n^{O(d\log(d)/\epsilon^2)})$ possible choices for $S$. We will consider each choice in turn to guess the correct set. Note that trying every set of the appropriate size will be the dominant factor in determining the algorithm's runtime.

The following lemma proves that for any fixed subset $S$, rounding the optimal solution to~\ref{eq:partition_convex_program} gives a good approximation to the optimal value of~\ref{eq:partition_convex_program}. 

\begin{lemma}\label{lem:partition_CP_rounding}
    Let $S \subseteq E$ be an independent set, and let $x^\star$ be the optimal solution to~\ref{eq:partition_convex_program}. Then rounding randomly in each part outputs an assignment $\sigma: [k] \rightarrow E$ with $\sigma(i) \in P_i$ such that
    \[ \Prob\left[\lambda_{\min}\left(\sum_{i=1}^k v_{i\sigma(i)}v_{i\sigma(i)}^\top\right) < (1-\epsilon) \cdot \lambda_{\min}\left( \sum_{i=1}^k \sum_{j\in P_i} x^\star_{ij}\, v_{ij}v_{ij}^\top \right) \right] < d^{-4}.\]
\end{lemma}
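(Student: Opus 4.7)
I would apply Corollary~\ref{cor:flexible-matrix-chernoff} with the normalizing matrix $A := X^\star = \sum_{i,j} x^\star_{ij}\, v_{ij} v_{ij}^\top$, which is precisely the expectation of the rounded matrix $\sum_i v_{i\sigma(i)} v_{i\sigma(i)}^\top$. With this choice, $\mu_{\min} = \lambda_{\min}((X^\star)^{-1/2}\, X^\star\, (X^\star)^{-1/2}) = 1$, so it would suffice to verify the uniform spectral bound $\lambda_{\max}((X^\star)^{-1/2}\, \bM_i\, (X^\star)^{-1/2}) \leq R := \epsilon^2/(10\log(d))$ over all summands $\bM_i := v_{i\sigma(i)} v_{i\sigma(i)}^\top$. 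If this holds, the Corollary gives failure probability at most $d\exp(-\epsilon^2/(2R)) = d\cdot d^{-5} = d^{-4}$, and the spectral conclusion $\sum_i \bM_i \succeq (1-\epsilon) X^\star$ yields the scalar bound on $\lambda_{\min}$ we want.

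Next I would partition the parts into ``frozen'' $S' := \{i\in[k] : \exists j,\ (i,j) \in S\}$ versus ``random'' $[k]\setminus S'$. For $i \in S'$, the CP fixes $x^\star_{ij^\star}=1$, so $\bM_i$ is deterministic; for $i \notin S'$, $\Pr(\bM_i = v_{ij}v_{ij}^\top) = x^\star_{ij}$. For random parts and any $j$ in the support, the constraint $x_{ij}=0$ on $L(S)$ forces $v_{ij}^\top A_S^{-1} v_{ij} \leq \epsilon^2/(10\log(d))$; combined with $A_S \preceq X^\star$ and operator monotonicity of the inverse, this gives $v_{ij}^\top (X^\star)^{-1} v_{ij} \leq \epsilon^2/(10\log(d))$, exactly the required $R$-bound for each random summand.

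The main obstacle is the deterministic parts: for $(i,j^\star) \in S$ the vector $v_{ij^\star}$ actually contributes to $A_S$ itself, and the best one can hope for is $v_{ij^\star}^\top (X^\star)^{-1} v_{ij^\star} \leq v_{ij^\star}^\top A_S^{-1} v_{ij^\star} \leq 1$ (from $v_{ij^\star}v_{ij^\star}^\top \preceq A_S$), which is far too weak. I would resolve this via the standard splitting trick: formally replace each deterministic $\bM_i$ by $K := \lceil 10\log(d)/\epsilon^2 \rceil$ independent copies of $\tfrac{1}{K}\bM_i$. This leaves both the distribution of $\sum_i \bM_i$ and its expectation unchanged, but every summand, random or split-deterministic, now satisfies $\lambda_{\max}((X^\star)^{-1/2}(\cdot)(X^\star)^{-1/2}) \leq \epsilon^2/(10\log(d))$. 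Applying Corollary~\ref{cor:flexible-matrix-chernoff} as in the first paragraph then completes the proof by taking $\lambda_{\min}$ of both sides of $\sum_i \bM_i \succeq (1-\epsilon) X^\star$.
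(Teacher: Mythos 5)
Your proposal is correct and follows essentially the same route as the paper's proof: normalize by $A = X^\star$ so that $\mu_{\min}=1$, use $A_S \preceq X^\star$ together with the exclusion of $L(S)$ to bound the leverage of every random summand by $\epsilon^2/(10\log d)$, and handle the deterministic vectors in $S$ by splitting each into $O(\log(d)/\epsilon^2)$ scaled copies before invoking Corollary~\ref{cor:flexible-matrix-chernoff}. Your write-up of the splitting step is in fact cleaner than the paper's.
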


\begin{proof}
    Let $X =  \sum_{i=1}^k \sum_{j\in P_i} x^\star_{ij}\, v_{ij}v_{ij}^\top $. The matrix $X^\star$ contains $v_{ij}v_{ij}^\top$ with coefficient $1$ for every $(i,j) \in S$. Thus $X \succeq \sum_{i\in S} v_iv_i^\top$, so $v_j^\top X^{-1} v_j \leq \frac{\epsilon^2}{10\log(d)}$ for all $j \notin L\cup S$. 
    For the purposes of the analysis, for every $j \in S$ we can replace the vector $v_i$ with $(v_i^\top X^{-1} v_i) \cdot \sqrt{10\log(d)/\epsilon^2}$ copies of the same vector scaled down to have squared-length at most $\epsilon^2 /(10\log(d))$ with respect to $X$. Since all elements of $S$ get value $1$ in $x^\star$, we can similarly extend the vector $x^\star$ so that it has a $1$ in all the copied entries. Since these values of $x^*$ are deterministic, nothing changes about the resulting distribution over matrices, but we can now assume that $v_{ij}^\top X^{-1} v_{ij} < \epsilon^2 /(10\log(d))$ for all $(i,j)$ in the support of $x^\star$.

    Next, define random matrices $\bM_1,\ldots,\bM_k$ such that for any $i \in [k]$, $\Prob\left(M_i = v_{ij} v_{ij}^\top \right) = x^\star_{ij}$ for all $j \in P_i$. We then apply Corollary~\ref{cor:flexible-matrix-chernoff} with $A = X$ on random matrices $\bM_1,\ldots,\bM_k$. Since $x^\star$ is not supported on $L$, 
    \begin{equation*}
        R = \max_i \; \lambda_{\max}(X^{-1/2}\bM_i X^{-1/2}) \leq \frac{\epsilon^2}{10\log(d)}\,.
    \end{equation*}
    In addition, as $\E\left(\sum_i M_i \right)= X$ by definition, we have $\mu_{\min} = 1$.

    Thus, if $\sigma: [k] \rightarrow E$ is the choice function obtained by independent rounding, 
    \begin{align*} 
        \Prob_{\sigma} \left[ \sum_{i=1}^k v_{i\sigma(i)}v_{i\sigma(i)}^\top \nsucceq  (1-\epsilon) X\right] 
        &\leq d \cdot \exp(-5\log(d)) = d^{-4}.
    \end{align*}
    We conclude that $\lambda_{\min}\left(\sum_{i=1}^k v_{i\sigma(i)}v_{i\sigma(i)}^\top \right) \geq (1-\epsilon)\lambda_{\min}\left( X \right) $ with probability at least $1 - d^{-4}$.
\end{proof}
 
Combining this lemma with the earlier guarantee that there exists a set $S$ of reasonable size such that~\ref{eq:partition_convex_program} is a relaxation, we get the following algorithm: try all possible choices for the set $S$ and return the solution with the best objective.
\begin{algorithm}[H]
    \caption{Algorithm to find an approximation to $\OPT$}
    \label{alg:partition_alg}
    \begin{algorithmic}[1]
        \State \textbf{Input}: Partition matroid $\mathcal{M}$ with $k$ parts $P_1, \ldots, P_k$
        \For{each $S \subseteq [n]$ such that  $|S| = 10d\log(d)/ \epsilon^2 + d -1$}
        \State $x^* \leftarrow$ optimal solution of~\ref{eq:partition_convex_program} for matroid $\mathcal{M}$
        \State For each $i \in [k]$, assign $\sigma_S(i) = j$ with probability $x^*_{ij}$ 
        \EndFor
        \State Return the choice function $\sigma_S$ which maximizes $  \lambda_{\min} \left(\sum_{i} v_{i\sigma_S(i)} v_{i\sigma_S(i)}^\top\right)$ over all choices of $S$
    \end{algorithmic}
\end{algorithm}

\begin{proof}{(of Theorem~\ref{thm:min_lambda} for Partition Matroids)}
    By Lemma~\ref{lem:S_relaxation} there is a set $S\subseteq E$ with $|S| = O(d\log d/ \epsilon^2)$ such that~\ref{eq:partition_convex_program} is a relaxation.

    Let $x^*$ be the optimal value of~\ref{eq:partition_convex_program} and let $ \sigma^\star$ be the choice function of the optimal basis for the  minimum eigenvalue problem.
    Since~\ref{eq:partition_convex_program} is a relaxation, we have $ \lambda_{\min}\left(\sum_{i =1}^k v_{i\sigma^*(i)} v_{i\sigma^*(i)}^\top \right) \leq  \lambda_{\min}\left(\sum_{ij} x^*_{ij} v_{ij} v_{ij}^\top \right)$.
    
    Lemma~\ref{lem:partition_CP_rounding} implies that with high probability, the choice function obtained by rounding $x^*$, $\sigma_S$, is a good approximation to~\ref{eq:partition_convex_program}. So combining Lemma~\ref{lem:partition_CP_rounding} with the previous inequality gives
    \begin{align*}
        \lambda_{\min}\left( \sum_{i=1}^k v_{i\sigma_S(i)}v_{i\sigma_S(i)}^\top \right)
        &\geq (1-\epsilon)\cdot \lambda_{\min}\left(\sum_{ij} x^*_{ij} v_{ij} v_{ij}^\top \right)
        \geq (1-\epsilon)\cdot \lambda_{\min}\left(\sum_{i =1}^k v_{i\sigma^*(i)} v_{i\sigma^*(i)}^\top \right)\,,
    \end{align*}
    with probability at least $1 - d^{-4}$. Since we iterate over all choice functions in step 6 of Algorithm~\ref{alg:partition_alg}, we will output a choice function $\sigma$ which is at least as good as $\sigma_S$ with the same probability.
\end{proof}
\subsection{Application: Algorithmic Kadison-Singer Problem} \label{sec:ks}
The Kadison-Singer conjecture was resolved in~\cite{MarcusSpielmanSrivastavaKS} using the following theorem which can be interpreted as a generalization of Weaver's conjecture ~\cite{weaver2004kadison}.

\begin{theorem}\cite[Corollary 1.5 with $r=2$]{MarcusSpielmanSrivastavaKS}
    Let $u_1,\ldots,u_m\in\R^d$ be vectors such that $\sum_{i=1}^m u_iu_i^\top = I$ and $\|u_i\|^2 \leq \alpha$ for all $i.$ There exists a set $T \subseteq [m]$ such that
    \begin{equation*}
        \left(\frac{1}{2} - 3\sqrt{\alpha}\right)I_d \preceq \sum_{i \in T} u_iu_i^\top \preceq \left(\frac{1}{2} + 3\sqrt{\alpha}\right) I_d.
    \end{equation*}
\end{theorem}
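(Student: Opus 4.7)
The plan is to follow the method of interlacing polynomials of Marcus, Spielman, and Srivastava. A useful initial reduction is that a two-sided bound on $\sum_{i \in T} u_i u_i^\top$ follows from a one-sided upper bound applied to both sides of a partition: if one produces a partition $(S_1, S_2)$ of $[m]$ with $\lambda_{\max}(\sum_{i \in S_j} u_i u_i^\top) \leq \tfrac{1}{2} + 3\sqrt{\alpha}$ for both $j \in \{1, 2\}$, then the identity $\sum_{i \in S_1} u_i u_i^\top + \sum_{i \in S_2} u_i u_i^\top = I_d$ automatically gives $\lambda_{\min}(\sum_{i \in S_j} u_i u_i^\top) \geq \tfrac{1}{2} - 3\sqrt{\alpha}$ on both sides, so either $S_j$ may be taken as $T$.

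Next, I would set up the random model. Independently for each $i \in [m]$, define $w_i = \sqrt{2}\, u_i$ with probability $\tfrac{1}{2}$ and $w_i = 0$ otherwise, so that the indices with $w_i \neq 0$ form a uniformly random half of $[m]$. Isotropy gives $\sum_i \E[w_i w_i^\top] = I_d$, and $\|w_i\|^2 \leq 2\alpha$ almost surely, so a bound of the form $\lambda_{\max}(\sum_i w_i w_i^\top) \leq (1 + \sqrt{2\alpha})^2 \leq 1 + 6\sqrt{\alpha}$ on some realization translates, after dividing by $2$, into the claimed upper bound of $\tfrac{1}{2} + 3\sqrt{\alpha}$ on the selected side (for $\alpha$ not too large, with slack in the constant).

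The central analytic step is to analyze the expected characteristic polynomial $q(x) = \E[\det(xI_d - \sum_i w_i w_i^\top)]$. I would show that $q(x)$ admits the differential-operator representation $\prod_i (1 - \partial_{z_i})\, \det(xI_d - \sum_i 2 z_i\, u_i u_i^\top)|_{z = 0}$, and invoke the theory of real-stable polynomials to conclude that $q$ is real-rooted. The same manipulation, applied to partial conditionings of the random signs, shows that the conditional expected characteristic polynomials form an interlacing family; the standard lemma on interlacing families then yields a deterministic realization whose characteristic polynomial has largest root at most the largest root of $q$. To bound this largest root by $(1 + \sqrt{2\alpha})^2$, I would employ the multivariate barrier method: starting at $z = (1, \ldots, 1)$, where the bound on the largest root of $\det(xI_d - \sum_i 2 z_i\, u_i u_i^\top)$ is controlled because $\sum_i 2 u_i u_i^\top = 2 I_d$ and $\|2 u_i u_i^\top\| \leq 2\alpha$, track the upward shift of a chosen upper barrier under each application of $(1 - \partial_{z_i})$ via the standard barrier-shift inequality, and accumulate the total shift.

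The main obstacle is obtaining the matching upper bounds on both $S_1$ and $S_2$ from a single realization. Symmetry of the random partition ensures that the marginal distributions of $\sum_{i \in S_1} u_i u_i^\top$ and $\sum_{i \in S_2} u_i u_i^\top$ agree, but to guarantee a joint event one must apply the interlacing-family machinery to the polynomial $\det(xI_d - \sum_{i \in S_1} u_i u_i^\top) \cdot \det(xI_d - \sum_{i \in S_2} u_i u_i^\top)$, or equivalently to a block-diagonal matrix on $\R^{2d}$. Establishing real-stability and an interlacing structure for this augmented polynomial, and pushing the multivariate barrier analysis through the extra factor without losing more than a constant in $\sqrt{\alpha}$, is the technical core of the argument and is where essentially all the difficulty of the MSS proof resides.
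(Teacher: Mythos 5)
This statement is imported verbatim from Marcus--Spielman--Srivastava (their Corollary 1.5 with $r=2$); the paper offers no proof of its own, and your outline --- reducing the two-sided bound to a one-sided bound on both halves of a partition, lifting to independent block-diagonal random rank-one matrices, expected characteristic polynomials, interlacing families, and the multivariate barrier argument --- is precisely the argument of the cited source. One small correction to where you locate the difficulty: once you pass to the block-diagonal vectors $w_i \in \R^{2d}$ (independent, with $\sum_i \E[w_iw_i^\top] = I_{2d}$ and $\E\|w_i\|^2 \le 2\alpha$), the ``augmented'' product polynomial is exactly the expected characteristic polynomial of $\sum_i w_iw_i^\top$, so the general MSS theorem applies verbatim and no additional real-stability or barrier analysis is required beyond it; the corollary and the bound $\tfrac{(1+\sqrt{2\alpha})^2}{2} \le \tfrac12 + 3\sqrt{\alpha}$ then follow immediately.
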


Their proof is based on analyzing interlacing families of polynomials and does not lead to an efficient algorithm to find such a subset $T$.

In~\cite{jourdan2022algorithmic}, they introduce an algorithmic form of the Kadison-Singer problem, which asks to find such a subset assuming it exists. For a constant $c > 0$ and a set of vectors $u_1,\ldots, u_m \in \R^d$ such that $\|u_i\|^2 \leq \alpha$, $\sum_{i=1}^m u_iu_i^\top = I$ where there exists a subset $T\subseteq [m]$ satisfying 
\begin{equation}
    \left(\frac{1}{2} - c\sqrt{\alpha}\right)I \preceq \sum_{i \in T} u_iu_i^\top \preceq \left(\frac{1}{2} + c\sqrt{\alpha}\right),\label{ks:feasibility}
\end{equation}
the goal is actually to find a set $T\subseteq [m]$ which satisfies the above condition. This problem is FNP-hard when $c = 1/(4\sqrt{2})$ for general values of $d$~\cite[Theorem 2]{jourdan2022algorithmic}.

Their main result~\cite[Theorem 1]{jourdan2022algorithmic} is an algorithm with running time 
\[O\left( \binom{m}{k} \cdot \text{poly}(m,d)\right)\text{ for }k = O\left( \frac{d}{\epsilon^2} \log(d) \log\left( \frac{1}{c\sqrt{\alpha}}\right)\right)\,,\]
which returns a set $T' \subseteq [m]$ such that  
\begin{equation}
    (1-\epsilon)\left(\frac{1}{2} - c\sqrt{\alpha}\right)I \preceq \sum_{i \in T'} u_iu_i^\top \preceq (1+\epsilon)\left(\frac{1}{2} + c\sqrt{\alpha}\right)I, \label{ks:approximate_feas}
\end{equation}

In this section, we will show how to use the rounding technique for partition matroids to give a simpler algorithm that achieves the same guarantee with the same run time, except we save the small dependence on $\log(1/c\sqrt{\alpha})$ in the exponent.

\begin{proof}{(of Corollary~\ref{cor:ks_alg})}
    Given vectors $u_1,\ldots, u_m \in \R^d$, we construct an instance of the minimum eigenvalue with partition constraints as follows. Let $E = \{1,2\} \times [m]$, with $m$ parts $P_1,\ldots, P_m$ so that $P_i = \{(i,1),(i,2)\}$ for $i \in [m]$. For each $i \in [m]$ define the vectors
    \[ v_{i1} = \bmat{u_i\\0} \in \R^{2d}, \text{ and } v_{i2} = \bmat{0\\u_i} \in \R^{2d}.\]
    To see how $v$ and $u$ are related, note that for any $\delta \in [0,1/2)$ there is a choice function $\sigma: [m] \rightarrow \{1,2\}$ such that
    \begin{equation}
        \left(\frac{1}{2} - \delta\right) I_{2d} \preceq \sum_{i=1}^m v_{i\sigma(i)}v_{i\sigma(i)}^\top \label{eq:e1}
    \end{equation}
    if and only if there is a set $T\subseteq [m]$ such that
    \begin{equation}
        \left(\frac{1}{2} - \delta\right)I \preceq \sum_{i \in T} u_iu_i^\top \preceq \left(\frac{1}{2} + \delta\right)I. \label{eq:e2} 
    \end{equation} 
    Given $\sigma$ satisfying~\eqref{eq:e1}, let $X_1 := \sum_{i:\sigma(i) =1} u_iu_i^\top$ and $X_2 := \sum_{i : \sigma(i) = 2} u_iu_i^\top$. Then $X_1$ and $X_2$ are respectively the first and second diagonal $d\times d$ block of $\sum_{i=1}^m v_{i\sigma(i)}v_{i\sigma(i)}^\top$. Therefore $\left( \frac{1}{2} - \delta\right) I_{2d} \preceq \sum_{i=1}^m v_{i\sigma(i)}v_{i\sigma(i)}^\top$ if and only if $X_1 \succeq \left( \frac{1}{2} - \delta\right) I_{d}$ and $X_2 \succeq \left( \frac{1}{2} - \delta\right) I_{d}$. In addition, since $X_1 + X_2 = I_d$, this is equivalent to
    \begin{equation*}
        \left(\frac{1}{2} - \delta\right)I_{d} \preceq \sum_{i: \sigma(i)=1} u_i u_i = I_d - X_2 \preceq \left(\frac{1}{2} + \delta\right)I_{d}.
    \end{equation*}

    We then use Algorithm~\ref{alg:partition_alg} to find a $(1-\epsilon)$ approximate solution $\sigma: [m] \rightarrow \{1,2\}$ to input $\mathcal{M}$ and vectors $v_{ij}$. 
    Since we assume there is a set $T$ satisfying~\eqref{ks:feasibility}, Theorem~\ref{thm:min_lambda} implies that with probability at least $1-O(d^{-4})$, Algorithm \ref{alg:partition_alg} will return a choice function $\sigma^*$ such that $(1-\epsilon)\left(\frac{1}{2} - c\sqrt{\alpha}\right)I_{2d} \preceq \sum_{i=1}^m v_{i\sigma^*(i)}v_{i\sigma^*(i)}^\top$, and we will return the set  $T' = \{i \in [m] : \sigma^*(i) = 1\}$.

    From the equivalence between~\eqref{eq:e1} and~\eqref{eq:e2}, the set $ T' = \{i \in [m] : \sigma(i) = 1\} $ satisfies~\eqref{ks:approximate_feas} 

    \begin{equation*}
        (1-\epsilon)\left(\frac{1}{2} - c\sqrt{\alpha}\right)I_{d} \preceq \sum_{i \in T'} u_i u_i  \preceq (1+\epsilon)\left(\frac{1}{2} + c\sqrt{\alpha}\right)I_{d}\,.
    \end{equation*}
\end{proof}

\section{General Matroid Constraints}

In the general form of the problem, we are given a collection of vectors $v_1,\ldots, v_n\in\R^d$ and a matroid $\M = ([n],\I)$, and the goal is to find a basis $B \in \I$ which maximizes $ \lambda_{\min}\left( \sum_{i \in B} v_iv_i^\top \right).$
For background on matroids, see Appendix~\ref{sec:matroids}.

For a general matroid, the idea of finding a linear transformation under which all elements in the optimal solution have a small norm generalizes easily. So we can use the same approach of first guessing a set $S\subseteq E$ on a reasonable size and then solving the convex relaxation of the problem conditioned on $S$ being included in the solution.

Given a subset $S\subseteq [n]$, we can again set $A_S = \sum_{i \in S} v_{i}v_{i}^\top$, and consider the set of long vectors:
\[ L(S) = \left\{i \in [n]\del S : v_{i}^\top A_S^{-1} v_{i} > \frac{\epsilon^2}{10\log(d)}\right\}.\]
For a matroid $\M$, let $\mathcal{P}(M)\subseteq [0,1]^n$ denote the matroid base polytope. Then the following is the natural convex programming relaxation which excludes the ``long'' vectors.
\begin{equation*}
    \begin{array}{cl}
        \max & \lambda_{\min}\left( X\right) \\
        & X = \sum\limits_{i=1}^n x_{i} \cdot v_{ij}v_{ij}^\top \\
        & x \in \mathcal{P}(\M) \\
        & x_{i} = 0, \,\, \forall i \in L(S)\\
        & x_{i} = 1, \,\, \forall i \in S\\
        & x \geq 0
    \end{array}\tag{CP(S)}\label{eq:matroid_convex_program}
\end{equation*}
This convex program can be solved in polynomial time (see Appendix~\ref{sec:matroids}).
Just like in the partition case, Lemma~\ref{lem:local_search} guarantees that there is a set $S$ for which~\ref{eq:matroid_convex_program} is a relaxation for the minimum eigenvalue problem. 
As before, after solving~\ref{eq:matroid_convex_program}, we can guarantee that all the vectors in the fractional support of the optimal solution will have a small norm with respect to $A_S$. 

The challenge in extending the earlier approach to general matroid constraints comes from the rounding step. For a partition matroid, we could simply round the fractional optimum of~\ref{eq:partition_convex_program} independently in each part to obtain a basis. However, for more general constraints, it is not so clear how to round a fractional solution to a basis. 

Instead of rounding independently, we will use the technique of pipage rounding to find a basis.  The following lemma is the lower-tail version of the same concentration inequality proved in~\cite{harvey2014pipage}. For completeness, we will include a proof of the version we need in Appendix~\ref{sec:pipage_rounding_appendix}.

\begin{lemma}\label{lem:renormalized-pipage}
    Let $\mathcal{P}(\mathcal{M})$ be a matroid base polytope and $x \in \mathcal{P}(\mathcal{M})$. Let $M_1,\ldots, M_m$ be self-adjoint matrices that satisfy $\lambda_{\max}(M_i)\leq R $. Let $\mu = \lambda_{\min}\left(\sum_{i \in [n]} x_i M_i \right)$.  If randomized pipage rounding (Algorithm~\ref{alg:pipage_rounding}) starts at $x$ and
outputs the extreme point $\hat{x} = \chi(B)$ of $\mathcal{P}(\mathcal{M})$, then we have
\begin{equation*}
    \Prob\left[ \sum_{i \in B} M_i \leq (1-\epsilon) \cdot \mu \right] \leq  d\cdot \exp\left(\frac{-\epsilon^2 \mu}{2R}\right) \,.
\end{equation*}

\end{lemma}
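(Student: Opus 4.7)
The plan is to follow the matrix Laplace transform template (as in Tropp's book) and to replace the independence step of the classical matrix Chernoff by a martingale-style analysis of the pipage walk in the spirit of Harvey and Olver. Throughout I will assume $M_i \succeq 0$, which is the standard setting for matrix Chernoff and is the regime in which the lemma is invoked in the rest of the paper.

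Write $Y = \sum_{i\in B} M_i$. For any $\theta > 0$, the event $\lambda_{\min}(Y) \leq (1-\epsilon)\mu$ is equivalent to $\lambda_{\max}(e^{-\theta Y}) \geq e^{-\theta(1-\epsilon)\mu}$, so Markov together with $\lambda_{\max}(\cdot) \leq \operatorname{tr}(\cdot)$ on positive matrices gives
\[
\Prob[\lambda_{\min}(Y) \leq (1-\epsilon)\mu] \leq e^{\theta(1-\epsilon)\mu}\cdot \E_B\!\left[\operatorname{tr} e^{-\theta Y}\right].
\]

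The main technical step, and the one I expect to be the principal obstacle, is to control the trace moment generating function along the pipage walk. Following Harvey and Olver, I would establish the Chernoff-type inequality
\[
\E_B\!\left[\operatorname{tr} e^{-\theta Y}\right] \leq \operatorname{tr}\exp\!\left(\sum_i \log\bigl(I + x_i(e^{-\theta M_i}-I)\bigr)\right)
\]
by showing that the right-hand side is a supermartingale along randomized pipage: at each single swap step on a pair of fractional coordinates, the conditional expectation of the trace MGF does not increase. The disentangling of a single swap is done by Lieb's concavity theorem ($A \mapsto \operatorname{tr}\exp(H + \log A)$ is concave on the PSD cone), exactly as in the upper-tail argument of Harvey and Olver. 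The only subtlety relative to their setting is a sign flip: for the lower tail the operator updates $e^{-\theta M_i} - I$ are negative semidefinite, so the inductive invariant has to be phrased with $\preceq$ reversed in appropriate places. Lieb's inequality is insensitive to the sign of the perturbation, so the same induction goes through, but this is where the bulk of the work lies.

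Once the pipage MGF bound is in place, standard operator inequalities complete the proof. The scalar inequality $e^{-\theta t} \leq 1 - \tfrac{1-e^{-\theta R}}{R}\,t$ on $[0,R]$ (by convexity of $t \mapsto e^{-\theta t}$ compared to its chord) lifts via the spectral theorem to $e^{-\theta M_i} \preceq I - \tfrac{1-e^{-\theta R}}{R} M_i$ whenever $0 \preceq M_i \preceq RI$. Combining this with $\log(I+A) \preceq A$ and the Weyl monotonicity $A \preceq B \Rightarrow \operatorname{tr} e^A \leq \operatorname{tr} e^B$ yields
\[
\operatorname{tr}\exp\!\left(\sum_i \log(I + x_i(e^{-\theta M_i}-I))\right) \leq \operatorname{tr}\exp\!\left(-\tfrac{1-e^{-\theta R}}{R}\sum_i x_i M_i\right) \leq d\cdot \exp\!\left(-\tfrac{1-e^{-\theta R}}{R}\,\mu\right).
\]
Finally, set $\theta = \epsilon/R$ and use $1 - e^{-\theta R} \geq \theta R - \theta^2 R^2/2$; the combined exponent from Step 1 becomes $\theta(1-\epsilon)\mu - (\theta - \theta^2 R/2)\mu = -\epsilon^2\mu/(2R)$, giving exactly the claimed tail bound $d\exp(-\epsilon^2\mu/(2R))$.
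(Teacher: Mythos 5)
Your proposal is correct and follows essentially the same route as the paper: the quantity you track along the walk, $\operatorname{tr}\exp\bigl(\sum_i \log(I + x_i(e^{-\theta M_i}-I))\bigr)$, is exactly the Tropp pessimistic estimator $g_{t,\theta}$ (with $\theta \to -\theta$), your supermartingale claim is precisely its concavity under swaps via the generalized Lieb concavity theorem of Harvey and Olver, and your closing Chernoff calculus with $\theta=\epsilon/R$ reproduces the paper's invocation of Tropp's closed-form bound. The only difference is presentational: you unfold the Laplace-transform and cumulant-subadditivity steps that the paper imports as black boxes, and your explicit restriction to $M_i \succeq 0$ correctly matches the hypothesis actually needed (and used) in the paper.
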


We use this lemma to generalize our earlier approach to all matroids.

\begin{lemma} \label{lem:subset_opt}
    Let $S \subseteq E$ be an independent set in $\mathcal{M}$ and let $x^\star$ be the optimal solution to~\ref{eq:matroid_convex_program}. Then pipage rounding starting at $x^\star$ outputs a basis $B$ such that
    \[ \Prob\left[\lambda_{\min}\left(\sum_{i\in B} v_iv_i^\top\right) < (1-\epsilon) \lambda_{\min}\left( \sum_{i \in [n]} x^\star_i\, v_iv_i^\top \right) \right] < d^{-4}.\]
\end{lemma}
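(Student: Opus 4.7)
The plan is to mirror the proof of Lemma~\ref{lem:partition_CP_rounding}, substituting Lemma~\ref{lem:renormalized-pipage} in place of the independent-rounding matrix Chernoff bound. First I would set $X = \sum_{i \in [n]} x^\star_i\, v_i v_i^\top$ and rescale: with $\tilde v_i := X^{-1/2} v_i$ we have $\sum_i x^\star_i \tilde v_i \tilde v_i^\top = I_d$, so the target inequality $\sum_{i\in B} v_i v_i^\top \succeq (1-\epsilon) X$ becomes $\sum_{i \in B} \tilde v_i \tilde v_i^\top \succeq (1-\epsilon) I_d$, with $\mu = \lambda_{\min}(I_d) = 1$ in the rescaled frame.

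The central step is to bound $R = \max_i \lambda_{\max}(\tilde v_i \tilde v_i^\top) = \max_i v_i^\top X^{-1} v_i$. Because~\ref{eq:matroid_convex_program} fixes $x^\star_i = 1$ on $S$, we have $X \succeq A_S$, hence $X^{-1} \preceq A_S^{-1}$; together with the constraint $x_i = 0$ for $i \in L(S)$, this gives $v_i^\top X^{-1} v_i \leq v_i^\top A_S^{-1} v_i \leq \epsilon^2/(10\log d)$ for every $i$ in the support of $x^\star$ outside $S$. Vectors $i \in S$ need not satisfy this a priori, so I would import the virtual-copy trick from the partition case: replace each $v_i$, $i \in S$, with $c_i = \lceil v_i^\top X^{-1} v_i \cdot 10\log(d)/\epsilon^2\rceil$ scaled copies $v_i/\sqrt{c_i}$, and augment the matroid by making each copy a coloop (forced into every basis). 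Since $x^\star_i = 1$ for $i \in S$, pipage rounding never modifies those coordinates, so the added copies are deterministic and the joint distribution of $\sum_{i \in B} v_i v_i^\top$ is unchanged; after the reduction every element has squared rescaled norm at most $\epsilon^2/(10\log d)$.

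With $R \leq \epsilon^2/(10\log d)$ and $\mu = 1$, Lemma~\ref{lem:renormalized-pipage} gives failure probability at most $d\exp(-5\log d) = d^{-4}$, and conjugating back by $X^{1/2}$ yields the claimed bound on $\lambda_{\min}(\sum_{i \in B} v_i v_i^\top)$. The main point to be careful about, and the only meaningful departure from the partition proof, is the compatibility of the virtual-copy reduction with pipage rounding on a general matroid: the relevant observation is that coordinates of $x^\star$ tight at $1$ are never touched by a pipage swap, so inserting the coloops neither alters the matroid polytope restricted to the remaining fractional coordinates nor changes the distribution of the random basis produced by the algorithm on $\mathcal{M}/S$.
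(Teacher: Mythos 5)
Your proposal is correct and follows the same route as the paper, which proves this lemma by repeating the argument of Lemma~\ref{lem:partition_CP_rounding} with Lemma~\ref{lem:renormalized-pipage} in place of the independent-rounding Chernoff bound, including the same splitting of the vectors in $S$ into scaled deterministic copies. Your extra observation that pipage rounding never alters coordinates fixed at $1$ (so the copies stay deterministic) is exactly the justification the paper also invokes in its proof of Theorem~\ref{thm:general_f}.
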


The proof is identical to that of Lemma~\ref{lem:partition_CP_rounding}, except we use the matrix concentration inequality from Lemma~\ref{lem:renormalized-pipage}.

Using this lemma, the following algorithm gives a $(1-\epsilon)$-approximation with high probability.

\begin{algorithm}[H]
    \caption{Algorithm to find an approximation to $\OPT$}
    \label{alg:general_algorithm}
    \begin{algorithmic}[1]
        \For{each $S \subseteq [n]$ such that  $|S| = 10d\log(d)/ \epsilon^2 + d -1$}
        \State Solve~\ref{eq:matroid_convex_program} to get optimal solution $x^\star$.
        \State Let $B_S \leftarrow$ basis returned by Algorithm~\ref{alg:pipage_rounding} for input  $x^\star$ 
        \EndFor
        \State Return the basis $B_S$ with the best objective
    \end{algorithmic}
\end{algorithm}
\subsection{Proof of Theorem 1 and Theorem 2}
In this section we prove Theorem~\ref{thm:general_f}. The proof of Theorem~\ref{thm:min_lambda} follows identically.
\begin{proof}{(of Theorem~\ref{thm:general_f})}
Let $B^* = \arg\max_{B \in \I} f(\sum_{i \in B} v_i v_i^\top)$ and let $OPT = f\left(\sum_{i \in B^*}v_i v_i^\top\right)$. Let  $S  \subseteq B^*$ such that $|S| = O(d\log{d}/\epsilon^2)$, $A = \sum_{(i,j) \in S} v_{ij} v_{ij}^\top$ is invertible, and for all $(i,j)\in B^* \backslash S$, $v_{ij}^\top A^{-1} v_{ij} \leq \epsilon^2/10d\log{d}$. By Lemma~\ref{lem:local_search}, such a set $S$ exists.

Let $x^*$ be the optimal solution of~\ref{eq:matroid_convex_program}. Since the indicator vector of $B^*$ satisfies the constraints of~\ref{eq:matroid_convex_program}, $ OPT = f\left(\sum_{i \in B^*} v_i v_i^\top \right) \leq f\left(\sum_{i \in E} x^*_i \cdot v_i v_i^\top \right)$. Therefore,
\begin{equation*}
    \Pr\left[ f(\sum_{i \in \Tilde{B}} v_i v_i^\top ) < (1-\epsilon) \cdot OPT \right] \leq  \Pr\left[ f(\sum_{i \in \Tilde{B}} v_i v_i^\top ) < (1-\epsilon) \cdot f\left(\sum_{i \in E} x^*_i \cdot v_i v_i^\top \right)\right]\,.
\end{equation*}
Let $X :=\sum_{i \in E} x^*_i \cdot v_i v_i^\top $. Since $f$ is monotone and homogeneous, we have
    \begin{align}
        \Pr\left[ f(\sum_{i \in \Tilde{B}} v_i v_i^\top ) < (1-\epsilon) \cdot f(X)\right] &= \Pr\left[ f(\sum_{i \in \Tilde{B}} v_i v_i^\top ) < f((1-\epsilon) \cdot X)\right]\\
        &\leq \Pr\left[ \sum_{i \in \Tilde{B}} v_i v_i^\top  \nsucceq (1-\epsilon) \cdot X \right] \notag\\
        &= \Pr\left[ \sum_{i \in \Tilde{B}} X^{-1/2}v_i v_i^\top X^{-1/2}  \nsucceq (1-\epsilon) \cdot I \right] \notag \\
        &= \Pr\left[ \lambda_{\min}(\sum_{i \in \Tilde{B}} X^{-1/2}v_i v_i^\top X^{-1/2})  \leq (1-\epsilon)  \right] \label{eq:ineq1}
    \end{align} 
    
     Similar to the proof of Lemma~\ref{lem:partition_CP_rounding}, we will apply Lemma~\ref{lem:renormalized-pipage} to random matrices $M_i = v_i v_i^\top$ after appropriate transformations to ensure $R = O(\epsilon^2/\log{d})$. First note that since $x^*_i = 1$ for any $i \in S$, we have $i \in B$ as pipage rounding does not change the integral elements of $x^*$. Therefore $\sum_{i \in S} v_i v_i^\top \preceq X$, and for any $i \in B\backslash S$,
   \begin{equation*}
       \lambda_{\max}(X^{-1/2} v_i v_i^\top X^{-1/2}) = v_i^\top X^{-1} v_i \leq v_i^\top \left( \sum_{j\in S} v_j v_j^\top \right)^{-1} v_i \leq \frac{\epsilon^2}{10\log{d}}.
   \end{equation*}
   For any $i \in S$, $v_i X^{-1} v_i^\top \leq 1$, and since $x_i^* = 1$ we can replace $M_i = v_i v_i^\top$ with $r = 10\log{d}/ \epsilon^2$ matrices $M_{i}^1, \ldots, M_{i}^r$ such that  $M_i^j = \frac{\epsilon^2}{10\log{d}} v_i v_i^\top$. This ensures that $\lambda_{\max}(X^{-1/2} M_{i}^j X^{-1/2}) \leq \epsilon^2/10\log{d}$ for all $j \in [r]$. Applying Lemma~\ref{lem:renormalized-pipage} on random matrices $\{M_i\}_{i\in B\backslash S}, \{M_i^j\}_{i \in S, j \in [r]}$ gives 

   \begin{equation*}
        \Pr\left[ \lambda_{\min}(\sum_{i \in \Tilde{B}} X^{-1/2}v_i v_i^\top X^{-1/2})  \leq (1-\epsilon)  \right] \leq d^{-4}\,.
   \end{equation*}

   If $B$ is the basis returned by Algorithm~\ref{alg:general_algorithm}, then $f(B) \geq f(\Tilde{B})$. Using the above inequality with~\eqref{eq:ineq1} implies that $f(\sum_{i \in B} v_i v_i^\top ) \geq (1-\epsilon) \cdot OPT$ with probability at least $1-d^{-4}$.
\end{proof}

\subsection{Pipage Rounding}\label{sec:pipage_rounding}

The purpose of this section is to give an explanation of the pipage rounding technique and motivate the proof of Lemma~\ref{lem:renormalized-pipage}.
For a detailed discussion of pipage rounding, see~\cite{harvey2014pipage}. 

For a set $S\subseteq E$, let $x^*$ be the optimal solution of~\ref{eq:matroid_convex_program}. If we round $x^*$ independently, i.e., add element $i$ to the output with probability $x_i^*$, then the set obtained, say $B$, might not be independent in $\mathcal{M}$. But the following concentration inequality would still hold due to independence,
\begin{equation}
     \Prob\left[\lambda_{\min}\left(\sum_{i\in B} v_iv_i^\top\right) < (1-\epsilon) \lambda_{\min}\left( \sum_{i \in [n]} x^\star_i\, v_iv_i^\top \right) \right] < d^{-4}. \label{eq:ind}
\end{equation}

The main idea behind pipage rounding is to iteratively transform a point $x \in \mathcal{P}(\mathcal{M})$ to a basis of $\mathcal{M}$ while ensuring that the failure probability from equation~\eqref{eq:ind} does not increase.

For a point $x \in [0,1]^n$, let $D(x)$ represent the corresponding product distribution over $\{0,1\}^n$ with marginals given by $x$, i.e., include element $i$ in the output with probability $x_i$.
For $x \in [0,1]^n$ and $\epsilon > 0$, define 
\begin{equation*}
    p_\epsilon(x) := \Pr_{B \sim D(x)}\left( \lambda_{\min}(\sum_{i \in B}v_i v_i^\top) \leq (1-\epsilon)\cdot \lambda_{\min}\left(\sum_{i\in [n]} x_i v_i v_i^\top \right) \right)\,.
\end{equation*}
So $p_\epsilon(x)$ is the failure probability of getting a $(1-\epsilon)$-approximation when rounding independently at point $x$. \cite{harvey2014pipage} showed that there exists a function $g_\epsilon(x)$ s.t. $p_\epsilon(x) \leq g_\epsilon(x) \leq d\cdot \exp\left(\frac{-\epsilon^2 \mu_{min}}{2R}\right) $ and $g_\epsilon$ is concave under swaps, i.e., for all $a,b \in [n]$ and $x \in \mathcal{P}(\mathcal{M})$ the map 
   $ z \mapsto g_\epsilon(x + z(e_a - e_b))$
is concave.

So, if $x$ is not an extreme point of $\mathcal{P}(\mathcal{M})$, then there exist $a,b \in [n]$ and $\epsilon > 0$ such that $x \pm \epsilon(e_a + e_b) \in \mathcal{P}(\mathcal{M})$. Let $l = \min\{z: x + z(e_a-e_b) \in \mathcal{P}(\mathcal{M})\}$ and $ u = \max\{z: x + z(e_a-e_b) \in \mathcal{P}(\mathcal{M})$.

With this, we can define $x^l = x + l(e_a - e_b)$ and $x^u = x + u(e_a - e_b)$. Since $g(x + z(e_a-e_b))$ is concave as a function of $z$, we know that either $g(x^l) \leq g(x)$ or $g(x^u)\leq g(x)$. Moreover, both $x^l$ and $x^u$ are on a lower dimensional face than the initial point $x$. Thus, for any initial point $x_0 \in \mathcal{P}(\mathcal{M})$, a total of $m$ iterations suffice to find an extreme point with $\hat x$ with $g(\hat x) \leq g(x_0)$.

In randomized pipage starting at $x \in \mathcal{P}(\mathcal{M})$, our next iterate $x'$ of the rounding procedure will be $x^l$ with probability $\frac{u}{u-l}$ and $x^u$ with probability $\frac{-l}{u-l}$. This ensures that $\E(x') = x$, and the concavity under swaps guarantees that $\E[g_\epsilon(x')] \leq g_\epsilon(x)$ by Jensen's in the variable $z$. If we start at a point $x_0 \in \mathcal{P}(\mathcal{M}))$ and iterate this random procedure $m$ times, we get an extreme point $\hat x$ which satisfies $\E[\hat x] = x_0$ and $\E[g(\hat x)] \leq g(x_0)$.

This gives the intuition behind the proof of Lemma~\ref{lem:pipage_random_alg}, and leads to the following algorithm.
\begin{algorithm}[H]
    \caption{Randomized Pipage Rounding}
    \label{alg:pipage_rounding}
    \begin{algorithmic}[1]
    \State \textbf{Input}: Point $x \in \mathcal{P}(\mathcal{M})$, where $\mathcal{P}(\mathcal{M})$ is a matroid base polytope
    \While{$x$ is not integral}
    \State $a,b\leftarrow$ distinct elements of $[n]$ s.t. $\exists \epsilon > 0$ with $x \pm \epsilon (e_a - e_b) \in \mathcal{P}(\mathcal{M})$
    \State $\ell \leftarrow \min\{y \geq 0: x - y(e_a-e_b) \in \mathcal{P}(\mathcal{M})\}$
    \State $h \leftarrow \max\{y \geq 0: x + y(e_a-e_b) \in \mathcal{P}(\mathcal{M})\}$
    \State $x \leftarrow \begin{cases} x - \ell(e_a-e_b) & \text{w.p.  } \ell/(\ell + h)\\
    x + h(e_a-e_b) & \text{w.p.  } h/(\ell + h) 
    \end{cases}$
    \EndWhile
    \State Return basis $B \in \mathcal{P}(\mathcal{M})$ with indicator vector $x$
    \end{algorithmic}
    
\end{algorithm}

\section{Conclusion and Remarks}\label{sec:conclusion}

The resolution of the Kadison-Singer problem in~\cite{MarcusSpielmanSrivastavaKS} using the interlacing families of polynomials implies the following existential result about maximizing the minimum eigenvalue under partition matroid constraints.

\begin{theorem}~\cite[Theorem 1.4]{MarcusSpielmanSrivastavaKS} \label{thm:ks} For $\epsilon > 0$ and vectors $\{v_{ij}\}_{i \in [k], j \in [n]}\in \R^d$ with $\|v_{ij}\|^2 \leq \epsilon$ for all $i \in [k], j \in [n]$, if there exist $x_{ij} \geq 0$ such that
    \begin{equation*}
        \sum_{i=1}^k\sum_{j=1}^n x_{ij} \cdot v_{ij}v_{ij}^\top = I_d \quad \text{ and } \quad \sum_{j=1}^n x_{ij} = 1 \text{ for all }i\in[k],
    \end{equation*}
    then there exists a choice function $\sigma: [k]\rightarrow [n]$ such that
    \begin{equation*}
        (1-\sqrt{\epsilon})^2\cdot I_d \preceq \sum_{i=1}^k v_{i\sigma(i)}v_{i\sigma(i)}^\top \preceq (1+\sqrt{\epsilon})^2 \cdot I_d.
    \end{equation*}
\end{theorem}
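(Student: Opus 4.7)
The plan is to apply the method of interlacing families of characteristic polynomials of Marcus, Spielman, and Srivastava. Draw $\sigma:[k]\to[n]$ at random with $\Pr[\sigma(i)=j]=x_{ij}$ independently across $i$, and write $M(\sigma):=\sum_{i=1}^k v_{i\sigma(i)}v_{i\sigma(i)}^\top$. The assumption $\sum_{ij}x_{ij}v_{ij}v_{ij}^\top=I_d$ gives $\E_\sigma M(\sigma)=I_d$, and the length bound gives $\lambda_{\max}\bigl(v_{i\sigma(i)}v_{i\sigma(i)}^\top\bigr)\le\epsilon$ deterministically for every $i$ and every $\sigma$. The goal is to exhibit a single $\sigma$ for which $\lambda_{\max}(M(\sigma))\le(1+\sqrt{\epsilon})^2$ and $\lambda_{\min}(M(\sigma))\ge(1-\sqrt{\epsilon})^2$ simultaneously.

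The central object is the expected characteristic polynomial
\[
p(z)\;:=\;\E_\sigma\bigl[\det(zI_d-M(\sigma))\bigr]\;=\;\sum_{\sigma:[k]\to[n]}\Bigl(\prod_{i=1}^k x_{i\sigma(i)}\Bigr)\det\bigl(zI_d-M(\sigma)\bigr).
\]
First I would show $p$ is real-rooted by realizing it as a specialization of a real-stable polynomial. The multivariate determinant $\det\bigl(zI_d-\sum_{i,j}y_{ij}\,v_{ij}v_{ij}^\top\bigr)$ is real-stable in $(z,y)$ because it is a determinant of an affine pencil of Hermitian matrices, and applying, for each group $i$, the standard differential operator that enforces ``pick exactly one $j$'' with $x_{i,\cdot}$-weighted averaging preserves real-stability; specializing to $y=0$ leaves the univariate real-rooted polynomial $p(z)$.

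Next I would locate the roots of $p$ via the multivariate barrier method. Track an upper barrier $\phi_+(z)$ and a lower barrier $\phi_-(z)$ that measure the total reciprocal distance from test points $z_+$ and $z_-$ to the root set of the multivariate polynomial, and quantify how each changes when a single rank-one contribution of spectral norm at most $\epsilon$ is added under the trace identity implied by $\E_\sigma M(\sigma)=I_d$. Optimizing the shifts $z_\pm\mapsto z_\pm\pm\delta_\pm$ across all $k$ updates yields the tight thresholds $z_+=(1+\sqrt{\epsilon})^2$ and $z_-=(1-\sqrt{\epsilon})^2$. With these root bounds in hand, organize the leaves of the tree of partial prefixes $(j_1,\ldots,j_t)$ into an interlacing family by assigning to each node the conditional expected polynomial
\[
p_{j_1,\ldots,j_t}(z)=\E_\sigma\bigl[\det(zI_d-M(\sigma))\,\bigm|\,\sigma(s)=j_s,\ s\le t\bigr],
\]
verifying the common-interlacer property from the same real-stability already established. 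The interlacing-families theorem then selects a single leaf $\sigma^\star$ whose characteristic polynomial $\det(zI_d-M(\sigma^\star))$ has every root inside $[z_-,z_+]$, which is exactly the two-sided spectral sandwich claimed.

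The main obstacle is the two-sided barrier analysis. The one-sided bound on the largest root is the classical MSS argument, but obtaining a matching lower bound on the smallest root requires running a symmetric barrier $\phi_-$ whose monotonicity under rank-one updates is delicate because the global trace constraint couples the two sides. Choosing the shift parameters $\delta_+$ and $\delta_-$ so that neither barrier blows up after any update, and showing that the partition-structured averaging preserves both bounds simultaneously, is what yields the sharp constants $(1\pm\sqrt{\epsilon})^2$. Once this is in place the real-stability and interlacing-family steps are essentially formal.
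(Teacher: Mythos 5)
The paper does not prove this statement; it is imported verbatim as a citation to Marcus--Spielman--Srivastava, so there is no in-paper proof to compare against. Your outline of the upper-tail half is a faithful sketch of the MSS argument: the expected characteristic polynomial of $\sum_i v_{i\sigma(i)}v_{i\sigma(i)}^\top$ under independent sampling with marginals $x_{i\cdot}$ is a mixed characteristic polynomial, real-rootedness and the common-interlacer property follow from real stability, and the multivariate upper barrier gives the bound $(1+\sqrt{\epsilon})^2$ on the largest root, which an interlacing family transfers to some leaf $\sigma^\star$.

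The genuine gap is the lower bound $\lambda_{\min}\bigl(\sum_i v_{i\sigma(i)}v_{i\sigma(i)}^\top\bigr)\geq(1-\sqrt{\epsilon})^2$. The cited MSS Theorem 1.4 and its barrier analysis are one-sided: their key quantitative lemma controls only the \emph{largest} root of the mixed characteristic polynomial, and there is no companion lower barrier in that argument. Your proposal acknowledges this is ``the main obstacle'' but then simply asserts that a ``symmetric'' barrier $\phi_-$ works; the update inequalities for a lower barrier under the $(1-\partial_{y_{i}})$ operators are not the mirror image of the upper ones, and making them close requires a separate argument (in the spirit of the lower-barrier analysis in the restricted-invertibility line of work), which you do not supply. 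Moreover, even granting root bounds on $p$, the interlacing-families theorem only guarantees a leaf whose largest root is at most the largest root of $p$ and a (possibly different) leaf whose smallest root is at least the smallest root of $p$; obtaining a \emph{single} $\sigma^\star$ satisfying both sides simultaneously needs an additional argument. It is worth noting that in the only place the present paper actually needs a two-sided conclusion (Corollary~\ref{cor:ks_alg}), each part has exactly two choices whose contributions sum to a fixed block matrix, so the lower bound on one block follows by complementation from the upper bound on the other; no such complement exists for general $n$ choices per part, which is exactly the situation your proof must handle. As written, the step you flag as delicate is the entire content of the claim beyond MSS Theorem 1.4 and is missing.
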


We can state this result equivalently as an ``existential" rounding result.
When $\|v_{ij}\|^2 \leq \epsilon$, Theorem~\ref{thm:ks} implies that the integrality gap of the natural convex relaxation~\eqref{eq:cvx_natural} for the minimum eigenvalue problem with partition constraints is only $1/(1-\sqrt{\epsilon})^2$. It is an open problem to efficiently round the solution to the convex relaxation with comparable guarantees for any dimension $d$.

More generally, the problem of designing an approximation algorithm for the minimum eigenvalue problem under partition or matroid constraints in arbitrary dimensions remains wide open. However, checking whether there is a solution with a non-zero objective can be solved in polynomial time solvable through matroid intersection. Recently, there has been significant progress in the case of maximizing the determinant~\cite{J-simplex,DiSumma2015,Det-max-partition,AnariGV18,SinghX18,determinant-maximization,Det-Max2022}, but it remains open whether those techniques can be utilized for the minimum eigenvalue problem.

\newpage
\bibliographystyle{alpha}
\bibliography{paper}

\appendix
\section{Omitted proofs}\label{app:omitted_proofs}
\begin{proof}{(of Corollary \ref{cor:flexible-matrix-chernoff})}
    This is a simple calculation, using the fact the the semidefinite order is preserved under conjugation.
    \begin{align*}
        \Prob\left( \sum_{i=1}^k \bM_i \nsucceq (1-\epsilon) \mu_{\min}\cdot A \right)
        &= \Prob\left( \sum_{i=1}^k A^{-1/2}\bM_iA^{-1/2} \nsucceq (1-\epsilon) \mu_{\min}\cdot I\right)\\
        &= \Prob\left( \lambda_{\min}\left(\sum_{i=1}^k A^{-1/2}\bM_iA^{-1/2}\right) < (1-\epsilon) \mu_{\min}\right)\\
        &\leq d\cdot \exp\left(\frac{-\epsilon^2 \mu_{min}}{2R}\right)\,.
    \end{align*}
\end{proof}

\subsection{Integrality Gap Example} \label{sec:integrality_gap}
Consider the vectors $v_{1} = e_1, v_{2} = e_1, v_{3} = e_2, v_{4} = e_3$ in $\R^3$ and a partition matroid $\mathcal{M} = ([4], \mathcal{I})$ defined by the bases $\{1,2,3\}, \{1,2,4\}$. The optimal value of maximizing the minimum eigenvalue for this instance is $0$ as we are forced to pick $v_1$ and $v_2$ in any basis and they are linearly dependent.

The convex relaxation of maximizing the minimum eigenvalue for this instance is given by
\begin{equation}
    \begin{array}{cl}
        \max & \lambda_{\min}\left( X\right) \label{eq:cvx_int}\tag{CP}\\
        & X = x_{1} \cdot v_1 v_1^\top + x_{2} \cdot v_2 v_2^\top + x_{3} \cdot v_3 v_3^\top+ x_{4} \cdot v_4 v_4^\top \\
        & x_1 = 1, \quad \forall i \in [k]\\
        & x_2 = 1, \quad \forall i \in [k]\\
        & x_3 + x_4 = 1, \quad \forall i \in [k]\\
        & x \geq 0
    \end{array}
\end{equation}
The optimum of~\eqref{eq:cvx_int} is attained when $x_1 = x_2 = 1$ and $x_3 = x_4$ which gives
\begin{equation*}
    X = 2 e_1e_1^\top + \frac{1}{2} e_2 e_2^\top + \frac{1}{2} e_3 e_3^\top\,.
\end{equation*}
So the optimal value of~\eqref{eq:cvx_int} is $1/2$, whereas the true optimal is $0$. 

\section{Matroids and Pipage Rounding} \label{sec:pipage_rounding_appendix}
In this section, we provide the necessary background on matroids, as well as the lower tail versions of lemmas from~\cite{harvey2014pipage}, which let us prove Lemma~\ref{lem:renormalized-pipage}. 
\subsection{Matroids} \label{sec:matroids}
A pair $\mathcal{M} = (E,\mathcal{I})$ is a matroid if $E$ is a finite set and $\mathcal{I}$ is a collection of subsets of $E$ satisfying
\begin{enumerate}[(1)]
    \item If $I \in \mathcal{I}$ and $J \subseteq I$ then $J\in \mathcal{I}$, and
    \item If $I, J \in \mathcal{I}$ and $|I| < |J|$ then there is $e \in J\del I$ such that $I\cup\{e\} \in \mathcal{I}$.
\end{enumerate}
The sets in $\mathcal{I}$ are referred to as the \emph{independent sets} of the matroid $\mathcal{M}$. The maximal sets in $\mathcal{I}$ are called \emph{bases}, and it is a consequence of the matroid axioms that all bases have the same cardinality. For a subset $U \subseteq E$, we denote my $r(U)$ the maximum size of an independent set in $U$ and call this the rank of $U$. In this notation, we can say that every basis of $\mathcal{M}$ has cardinality exactly $r(E)$. Given a matroid $\mathcal{M}$, the matroid base polytope is the convex hull of indicator vectors of the bases of $\mathcal{M}$, and is denoted $\mathcal{P}(\mathcal{M})$. The base polytope has the following linear description
\begin{align*}
    \mathcal{P}(\mathcal{M})
    &= \text{conv}\left\{\chi(B) : B\text{ a basis of } \mathcal{M} \right\}\\
    &= \left\{ x \in \R^E : \sum_{e\in E}x_e = r(E), \sum_{e \in U} x_e \leq r(U) \,\,\forall U\subseteq E, x\geq 0\right\}.
\end{align*}
Cunningham ~\cite{Cunningham1984} showed that given $x \in \R^E_+$, it is possible to find a violated constraint for $\mathcal{P}(\mathcal{M})$ in strongly polynomial time using only an independence oracle for the matroid $\mathcal{M}$.

\subsection{Pipage Rounding}

The following theorem follows from the discussion in Section~\ref{sec:pipage_rounding}.

\begin{theorem}\cite{harvey2014pipage}\label{thm:pipage_jensen}
    There is a randomized polynomial time algorithm that, given $x_0 \in \mathcal{P}(\mathcal{M})$, outputs an extreme point $\hat x$ of $\mathcal{P}(\mathcal{M})$ with $\E[\hat x] = x_0$ and such that for any $g$ concave under swaps $\E[g(\hat x)] \leq g(x)$.
\end{theorem}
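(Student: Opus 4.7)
The plan is to analyze Algorithm~\ref{alg:pipage_rounding} directly, verifying three properties: (i) it terminates in polynomial time at an extreme point of $\mathcal{P}(\mathcal{M})$; (ii) the output $\hat x$ is an unbiased rounding, $\E[\hat x] = x_0$; and (iii) $\E[g(\hat x)] \leq g(x_0)$ for every $g$ that is concave under swaps. Items (ii) and (iii) will follow from a one-step analysis of the two-point update combined with the tower property, while (i) is where the matroid structure enters.

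For (i), I would appeal to the combinatorial structure of the matroid base polytope. Whenever the current iterate $x$ is not an extreme point, the minimal face of $\mathcal{P}(\mathcal{M})$ containing $x$ has positive dimension, and the matroid exchange axiom then produces a pair of distinct coordinates $(a,b)$ for which $x \pm \varepsilon(e_a - e_b) \in \mathcal{P}(\mathcal{M})$ for some $\varepsilon > 0$. Hence step 3 of the algorithm can always find a valid swap. After the update in step 6, at least one additional constraint of $\mathcal{P}(\mathcal{M})$ becomes tight (a non-negativity bound, an upper bound, or a matroid rank inequality), so the dimension of the minimal face containing $x$ strictly decreases. Thus the algorithm terminates after at most $n$ iterations at an extreme point, which by integrality of $\mathcal{P}(\mathcal{M})$ is the indicator vector of a basis. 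Polynomiality of each iteration follows because feasibility of a swap and the endpoints $\ell, h$ can be determined with a polynomial number of calls to an independence oracle for $\mathcal{M}$.

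For (ii) and (iii), conditioned on the current iterate $x$ and the chosen pair $(a,b)$, the update distribution assigns probabilities $p$ and $1-p$ to the two endpoints $x - \ell(e_a - e_b)$ and $x + h(e_a - e_b)$, where $p$ is chosen so that the update is unbiased. Solving $p\cdot(-\ell) + (1-p)\cdot h = 0$ fixes $p = h/(\ell + h)$, matching the step probabilities in the algorithm, so $\E[x' \mid x, (a,b)] = x$. Iterating over all rounding steps via the tower property yields $\E[\hat x] = x_0$, giving (ii). For (iii), the swap-concavity hypothesis says that the one-dimensional function $z \mapsto g(x + z(e_a - e_b))$ is concave on $[-\ell, h]$, so Jensen's inequality applied to the two-point distribution on $x'$ gives
\begin{equation*}
    \E[g(x') \mid x, (a,b)] \;\leq\; g\bigl(\E[x' \mid x, (a,b)]\bigr) \;=\; g(x).
\end{equation*}
Averaging over the choice of $(a,b)$ and invoking the tower property across all iterations yields $\E[g(\hat x)] \leq g(x_0)$.

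The main obstacle is the combinatorial part of (i): justifying that a feasible swap direction always exists at a non-extreme point of $\mathcal{P}(\mathcal{M})$ and that every update strictly reduces the dimension of the smallest face containing the current iterate, so that the algorithm reaches an extreme point in polynomially many steps. This is where the matroid structure of $\mathcal{P}(\mathcal{M})$ is used essentially; once it is in place, parts (ii) and (iii) reduce to a standard martingale plus Jensen argument on each step.
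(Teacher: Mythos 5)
Your argument is correct and follows essentially the same route as the paper's own justification (the discussion in Section~\ref{sec:pipage_rounding}, deferring to~\cite{harvey2014pipage}): existence of a swap direction $e_a-e_b$ on any positive-dimensional face of the base polytope, strict decrease of the face dimension at each step, a mean-preserving two-point update, and Jensen's inequality applied along the swap segment where $g$ is concave. One small remark: your derivation $p=h/(\ell+h)$ for the probability of moving to $x-\ell(e_a-e_b)$ is the correct mean-zero choice; the probabilities as printed in Algorithm~\ref{alg:pipage_rounding} appear transposed, so your computation corrects a typo rather than literally matching the pseudocode.
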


We will mainly make use of this theorem through the following claim. The conditions of the claim come from pessimistic estimators, but not all of them are strictly necessary. For a point $x \in [0,1]^n$, let $D(x)$ represent the corresponding product distribution over $\{0,1\}^n$ with marginals given by $x$.

\begin{lemma}\cite{harvey2014pipage}\label{lem:pipage_random_alg}
    Let $\mathcal{E} \subseteq \{0,1\}^n$ and $g: \mathcal{P}(\mathcal{M}) \rightarrow \R$ satisfy
    \begin{align*}
        &\Prob_{x\sim D(x)}[x \in \mathcal{E}] \leq g(x), \text{ and }\\
        &\min\{g(x - x_ie_i), g(x + (1-x_i)e_i)\} \leq g(x)
    \end{align*}
    for all $x \in [0,1]^n$, and $g$ be concave under swaps. If pipage rounding is started at an initial point $x_0 \in P$ and $\hat x$ is the random extreme point, then $\Prob[\hat x \in \mathcal{E}] \leq g(x_0)$.
\end{lemma}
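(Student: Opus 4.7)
The proof is a clean application of the pessimistic-estimator + supermartingale pattern. The goal is to chain two inequalities: pointwise $\Prob[\hat x \in \mathcal{E}] \leq g(\hat x)$ at any integral point of $\mathcal{P}(\mathcal{M})$, and $\E[g(\hat x)] \leq g(x_0)$ along the trajectory of randomized pipage rounding. I do not expect to use the coordinate-rounding hypothesis $\min\{g(x - x_ie_i), g(x + (1-x_i)e_i)\} \leq g(x)$ for the matroid-base version stated here; it is present in the hypothesis list because it is automatic for the pessimistic estimators used in applications and would come into play for variants of pipage rounding that collapse single fractional coordinates.

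\textbf{Step 1 (pointwise estimator at the integral endpoint).} Since $\hat x$ is an extreme point of the matroid base polytope $\mathcal{P}(\mathcal{M})$, it is the indicator vector of a basis and in particular lies in $\{0,1\}^n$. For any integral $y \in \{0,1\}^n$ the product distribution $D(y)$ is a point mass at $y$, so $\Prob_{z \sim D(y)}[z \in \mathcal{E}] = \mathbf{1}[y \in \mathcal{E}]$. Applied at $y = \hat x$, the first hypothesis therefore gives $\mathbf{1}[\hat x \in \mathcal{E}] \leq g(\hat x)$ pointwise. Taking expectation over the algorithm's internal randomness,
\[ \Prob[\hat x \in \mathcal{E}] \;=\; \E\bigl[\mathbf{1}[\hat x \in \mathcal{E}]\bigr] \;\leq\; \E[g(\hat x)]. \]

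\textbf{Step 2 (supermartingale along the pipage trajectory).} Let $x_0, x_1, \ldots, x_T = \hat x$ be the sequence of iterates produced by Algorithm~\ref{alg:pipage_rounding}. I would show $\E[g(x_{t+1}) \mid x_t] \leq g(x_t)$ at every step. Conditioned on $x_t$, the algorithm picks coordinates $a, b$ and stepsizes $\ell, h \geq 0$ and sets $x_{t+1} = x_t + Z(e_a - e_b)$ where $Z \in \{-\ell, h\}$ with probabilities calibrated so that $\E[Z \mid x_t] = 0$, i.e.\ $\E[x_{t+1} \mid x_t] = x_t$. Define $\varphi(z) := g(x_t + z(e_a - e_b))$; the concavity-under-swaps hypothesis says exactly that $\varphi$ is concave in $z$. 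Jensen's inequality then yields
\[ \E[g(x_{t+1}) \mid x_t] \;=\; \E[\varphi(Z) \mid x_t] \;\leq\; \varphi(\E[Z \mid x_t]) \;=\; \varphi(0) \;=\; g(x_t). \]
Telescoping and taking total expectation yields $\E[g(\hat x)] = \E[g(x_T)] \leq g(x_0)$. Combining with Step~1 gives $\Prob[\hat x \in \mathcal{E}] \leq g(x_0)$, as claimed.

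\textbf{Main obstacle.} The argument is essentially mechanical once set up correctly; the care is in (a)~ensuring the random step of Algorithm~\ref{alg:pipage_rounding} genuinely has zero conditional mean along the swap direction (this is what makes $g(x_t)$ a supermartingale rather than just bounded), and (b)~verifying that pipage terminates in finitely many steps at an extreme point of $\mathcal{P}(\mathcal{M})$, a standard observation since each swap either fixes a coordinate to $\{0,1\}$ or moves to a strictly lower-dimensional face of $\mathcal{P}(\mathcal{M})$. The concavity-under-swaps condition is exactly tailored so that the one-dimensional Jensen step works, which is why the particular structure of swap moves (rather than arbitrary perturbations) is essential; trying to run the same argument with general steps would fail because $g$ need not be globally concave on $\mathcal{P}(\mathcal{M})$.
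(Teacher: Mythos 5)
Your proof is correct and follows the same pessimistic-estimator-plus-supermartingale argument that the paper sketches in Section~\ref{sec:pipage_rounding} (the lemma is cited from Harvey--Olver and the paper gives only this intuition rather than a formal proof); your observations that the point-mass property of $D(\hat x)$ at integral points converts the estimator bound into $\mathbf{1}[\hat x \in \mathcal{E}] \leq g(\hat x)$, and that the single-coordinate hypothesis is not needed for the base-polytope version, are both accurate. One detail worth flagging: your Step~2 hinges on the pipage step having zero conditional mean, which requires moving to $x - \ell(e_a - e_b)$ with probability $h/(\ell+h)$ and to $x + h(e_a - e_b)$ with probability $\ell/(\ell+h)$ — Algorithm~\ref{alg:pipage_rounding} as printed has these two probabilities swapped (giving drift $(h-\ell)(e_a-e_b)$), although the prose in Section~\ref{sec:pipage_rounding} states the correct calibration.
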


Essentially, this lemma says that if we have a pessimistic estimator which is concave under swaps, then pipage rounding has the same type of concentration behavior as independent rounding, but will actually return a vertex of the matroid polytope.

For our particular application, we will be choosing the function $g$ to be an estimator for matrix concentration due to Tropp~\cite{tropp2015introduction}.

\begin{theorem}\cite[Theorem 5.1.1]{tropp2015introduction}\label{thm:tropp_concentration}
    Let $M_1,\ldots, M_n$ be self-adjoint matrices with $\lambda_{\max}(M_i) \leq R$ for all $i\in [n]$ and let $\mu_{\min} = \lambda_{\min}(\E_{x\sim \mathcal{D}(x)}[\sum_{i\in[n]} x_i M_i]$. For $t \in \mathbb{R}$, we have the bound
    $$ \Prob_{x \sim D(x)} \left[\lambda_{\min}\left( \sum_{i \in [n]} x_i M_i\right) \leq t\right] \leq \inf_{\theta < 0} g_{t, \theta}(x)$$ where $g_{t,\theta}(x) = e^{-\theta t} \cdot \mathrm{tr} \exp\left( \sum_{i\in[n]} \log \E e^{\theta x_i \cdot M_i}\right)$. Furthermore, for $t = (1-\epsilon) \mu_{\min}$, $$g_{t,\theta}(x) \leq d \cdot \left(\frac{e^{-\epsilon}}{(1-\epsilon)^{1-\epsilon}}\right)^{\mu_{\min}/R}.$$
\end{theorem}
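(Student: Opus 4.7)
The plan is to follow the standard matrix Laplace transform argument: first reduce the lower tail event to a trace-exponential moment bound, then apply Lieb's concavity theorem to decouple the expectation across the independent coordinates, and finally optimize in $\theta$ to obtain the explicit Chernoff-type exponent.

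First, for any $\theta < 0$, the map $A \mapsto e^{\theta A}$ reverses the ordering of eigenvalues, so $\lambda_{\min}(Y) \leq t$ implies $\lambda_{\max}(e^{\theta Y}) \geq e^{\theta t}$, where $Y := \sum_i x_i M_i$. Since $e^{\theta Y}$ is positive definite, $\lambda_{\max}(e^{\theta Y}) \leq \mathrm{tr}(e^{\theta Y})$, and Markov's inequality then gives
\[ \Prob\!\left[\lambda_{\min}(Y) \leq t\right] \;\leq\; e^{-\theta t} \cdot \E\!\left[\mathrm{tr}\, e^{\theta Y}\right]. \]

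Next, since $x_1,\ldots,x_n$ are independent under $D(x)$, the summands $\theta x_i M_i$ are independent self-adjoint random matrices, and Lieb's concavity theorem yields the subadditivity of the matrix cumulant generating function,
\[ \E\!\left[\mathrm{tr}\,\exp\!\Bigl(\textstyle\sum_i \theta x_i M_i\Bigr)\right] \;\leq\; \mathrm{tr}\,\exp\!\Bigl(\textstyle\sum_i \log \E\, e^{\theta x_i M_i}\Bigr). \]
Combined with the previous step, this gives $\Prob[\lambda_{\min}(Y) \leq t] \leq g_{t,\theta}(x)$, and taking the infimum over $\theta < 0$ yields the first assertion. This is the main obstacle of the argument: Lieb's theorem is a nontrivial fact from matrix analysis, and it is the single ingredient that lifts the proof from the scalar Chernoff bound (where independence plus the product-of-MGFs identity is elementary) to the matrix setting.

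For the explicit Chernoff estimate with $t = (1-\epsilon)\mu_{\min}$, compute $\E\, e^{\theta x_i M_i} = (1-p_i) I + p_i e^{\theta M_i}$ with $p_i = \E[x_i]$, apply the operator inequality $\log(I + A) \preceq A$, and use the scalar convexity bound $e^{\theta s} \leq 1 + (e^{\theta R}-1)s/R$ on $[0,R]$ (valid for every $\theta \in \mathbb{R}$) in its matrix form to obtain
\[ \log \E\, e^{\theta x_i M_i} \;\preceq\; \tfrac{e^{\theta R}-1}{R}\, p_i M_i. \]
Summing in $i$ and using monotonicity of $\mathrm{tr}\,\exp$ under the Loewner order, then observing that for $\theta < 0$ the coefficient $\tfrac{e^{\theta R}-1}{R}$ is negative, so the largest eigenvalue of $\tfrac{e^{\theta R}-1}{R}\,\E[Y]$ equals $\tfrac{e^{\theta R}-1}{R}\mu_{\min}$, gives $g_{t,\theta}(x) \leq d\cdot \exp\!\bigl(-\theta t + \tfrac{e^{\theta R}-1}{R}\mu_{\min}\bigr)$. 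Choosing $\theta = \log(1-\epsilon)/R$ (negative since $\epsilon \in (0,1)$) makes $e^{\theta R} = 1-\epsilon$ and, after simplification, yields the claimed bound $d\cdot\bigl(e^{-\epsilon}/(1-\epsilon)^{1-\epsilon}\bigr)^{\mu_{\min}/R}$.
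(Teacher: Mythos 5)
Your proof is correct and is precisely the standard argument behind the cited result (the paper itself does not prove this theorem but quotes it from Tropp's monograph, whose proof proceeds exactly as you describe: matrix Laplace transform, subadditivity of the matrix cumulant generating function via Lieb's concavity theorem, and optimization at $\theta = \log(1-\epsilon)/R$). The only point worth noting is that your convexity bound $e^{\theta s} \leq 1 + (e^{\theta R}-1)s/R$ on $[0,R]$ implicitly uses $M_i \succeq 0$, a hypothesis the statement omits but which holds in the paper's application ($M_i = v_i v_i^\top$) and is required in Tropp's original formulation.
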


This is the lower-tail version of the same concentration inequality which was used in~\cite{harvey2014pipage}. In that paper, they provide an upper-tail version of Lemma~\ref{lem:renormalized-pipage} using a new generalization of Lieb's concavity theorem, stated below.
    \begin{lemma}\cite{harvey2014pipage}\label{lem:lieb_concave}
        Let $L\in \mathbb{S}_d$, $C_1, C_2 \in \mathbb{S}_d^{++}$, and $K_1, K_2 \in \mathbb{S}_d^+$. Then the univariate function
        \begin{equation*}
            z \rightarrow \mathrm{tr} \exp \left( L + \log(C_1 + z K_1) + \log(C_2 - zK_2) \right)
        \end{equation*}
        is concave in a neighborhood of $0$.
    \end{lemma}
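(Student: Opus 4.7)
The plan is to establish the claim via a second-derivative test. Concavity of $\phi(z) := \mathrm{tr}\exp(L + \log(C_1 + zK_1) + \log(C_2 - zK_2))$ in a neighborhood of $0$ will follow if I can show $\phi''(0) \leq 0$, since $\phi''$ is continuous in $z$ and the same argument (translated in $z$) upgrades to concavity throughout an open interval.

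Writing $X(z) = L + \log(C_1 + zK_1) + \log(C_2 - zK_2)$ and applying the Daleckii--Krein formula $\frac{d}{dz} e^{X(z)} = \int_0^1 e^{sX(z)} X'(z) e^{(1-s)X(z)}\, ds$ together with trace cyclicity, I obtain
\begin{equation*}
\phi''(0) \;=\; \int_0^1 \mathrm{tr}\bigl(e^{sX_0}\, X'(0)\, e^{(1-s)X_0}\, X'(0)\bigr)\, ds \;+\; \mathrm{tr}\bigl(e^{X_0}\, X''(0)\bigr),
\end{equation*}
where $X_0 = L + \log C_1 + \log C_2$, $X'(0) = D\log(C_1)[K_1] - D\log(C_2)[K_2]$, and $X''(0) = D^2\log(C_1)[K_1,K_1] + D^2\log(C_2)[K_2,K_2]$ (bilinearity of $D^2\log$ absorbs the minus sign into a square). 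Operator concavity of $\log$ yields $D^2\log(C_i)[K_i,K_i] \preceq 0$, hence $X''(0) \preceq 0$ and the second summand is nonpositive. The first summand, in contrast, is always nonnegative: decomposing $X_0 = \sum_i \lambda_i P_i$ in its eigenbasis rewrites it as $\sum_{i,j} \lVert P_i\, X'(0)\, P_j \rVert_F^2 \cdot (e^{\lambda_i}-e^{\lambda_j})/(\lambda_i-\lambda_j)$, a positive combination of Frobenius squared norms weighted by positive divided differences of $\exp$.

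The core technical task is thus to show that the negative piece dominates the positive piece. My plan is to plug in the standard integral representations $D\log(C)[K] = \int_0^\infty (C+uI)^{-1}K(C+uI)^{-1}\, du$ and $-D^2\log(C)[K,K] = 2\int_0^\infty (C+uI)^{-1}K(C+uI)^{-1}K(C+uI)^{-1}\, du$, turning $\phi''(0)$ into an iterated integral over $s$ and auxiliary parameters $u_1, u_2$. The key cancellation should come from a pointwise operator Cauchy--Schwarz inequality: the ``cross'' terms $\mathrm{tr}(e^{sX_0}\, D\log(C_1)[K_1]\, e^{(1-s)X_0}\, D\log(C_2)[K_2])$ appearing inside the first summand are controlled by expressions of the form $(C_i+uI)^{-1}K_i(C_i+uI)^{-1}K_i(C_i+uI)^{-1}$ that match the integrands of $-D^2\log(C_i)[K_i,K_i]$.

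The principal obstacle is that the scalar analogue $\mathrm{tr}\exp(\log A + \log B) = AB$ is \emph{not} jointly concave in $(A,B)$, so no generic Lieb-style joint-concavity can be invoked; the specific anti-monotone direction $(K_1, -K_2)$ with $K_i \succeq 0$ is essential, and in the scalar case the needed inequality collapses to $-2K_1K_2 \leq 0$. If the pointwise Cauchy--Schwarz comparison is too crude in full generality, I would fall back on a tailored generalization of Lieb's concavity theorem for two log-terms with oppositely-signed perturbations---precisely the content advertised by the authors as ``a new generalization of Lieb's concavity theorem''---which encodes exactly the required cancellation between the two contributions to $\phi''(0)$.
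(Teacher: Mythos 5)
The paper does not actually prove this lemma: it is imported verbatim from Harvey and Olver~\cite{harvey2014pipage} (their generalization of Lieb's concavity theorem) and used as a black box, so there is no internal proof to compare against. Judged on its own terms, your proposal has a genuine gap. The second-derivative setup is sound --- the formula for $\phi''(0)$, the observation that $\mathrm{tr}(e^{X_0}X''(0))\le 0$ by operator concavity of $\log$, and the divided-difference expansion showing the other summand is nonnegative are all correct --- but the entire content of the theorem is the assertion that the nonpositive term dominates, and that step is only announced as a ``plan.'' Your fallback, invoking ``a tailored generalization of Lieb's concavity theorem for two log-terms with oppositely-signed perturbations,'' is precisely the statement being proved, so the argument as written is circular.

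Two concrete reasons the sketched mechanism cannot be waved through. First, already the one-variable case $K_2=0$ of the required inequality, namely
\begin{equation*}
\int_0^1 \mathrm{tr}\bigl(e^{sX_0}\,D\log(C_1)[K_1]\,e^{(1-s)X_0}\,D\log(C_1)[K_1]\bigr)\,ds \;\le\; -\,\mathrm{tr}\bigl(e^{X_0}\,D^2\log(C_1)[K_1,K_1]\bigr),
\end{equation*}
is equivalent to Lieb's theorem that $A\mapsto \mathrm{tr}\exp(L+\log A)$ is concave, whose known second-derivative proofs are substantially harder than a pointwise Cauchy--Schwarz; nothing in your outline recovers it. Second, your claim that the cross terms $\mathrm{tr}(e^{sX_0}\,D\log(C_1)[K_1]\,e^{(1-s)X_0}\,D\log(C_2)[K_2])$ are ``controlled by'' the integrands of $-D^2\log(C_i)[K_i,K_i]$ is unsubstantiated; indeed, for positive semidefinite $P,Q$ the quantity $\mathrm{tr}(e^{sX_0}Pe^{(1-s)X_0}Q)=\sum_{i,j}e^{s\lambda_i+(1-s)\lambda_j}P_{ij}\overline{Q_{ij}}$ can take either sign, so there is no pointwise comparison to appeal to. To complete the argument you would have to reproduce the actual proof in~\cite{harvey2014pipage}; alternatively, since the paper itself only cites the lemma, the honest resolution is to do the same rather than present this sketch as a proof.
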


As a consequence, we get the following lemma.
\begin{lemma}\label{lem:matrix_swap_concavity}
    For $\theta < 0$, all $x \in [0,1]^m$, the function 
    \begin{equation*}
        g_{t,\theta}(x) =  e^{-\theta t} \cdot \mathrm{tr} \exp\left( \sum_{i\in[m]} \log \E_{x \sim \mathcal{D}(x)} e^{\theta x_i \cdot M_i}\right)
    \end{equation*}
    is concave under swaps.
\end{lemma}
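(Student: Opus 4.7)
The plan is to reduce the lemma to Lieb's concavity lemma (Lemma~\ref{lem:lieb_concave}) by isolating the two coordinates that move under a swap. Structurally this mirrors the upper-tail case treated in~\cite{harvey2014pipage}, with sign flips throughout because $\theta < 0$.

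First I would use independence under $\mathcal{D}(x)$ to evaluate each factor:
\[ \E_{x \sim \mathcal{D}(x)}\, e^{\theta x_i M_i} = (1-x_i)I + x_i e^{\theta M_i} = I - x_i K_i, \qquad \text{where } K_i := I - e^{\theta M_i}. \]
Since $\theta < 0$ and (in the intended application) $M_i \succeq 0$, the eigenvalues of $e^{\theta M_i}$ lie in $(0,1]$, so $0 \preceq K_i \prec I$, and hence $I - x_i K_i \succ 0$ for every $x_i \in [0,1]$. This puts $g_{t,\theta}$ in the form
\[ g_{t,\theta}(x) = e^{-\theta t} \cdot \mathrm{tr}\exp\left(\sum_{i=1}^m \log(I - x_i K_i)\right). \]

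Next I would fix a swap direction $a,b \in [m]$ and consider $h(z) := g_{t,\theta}(x + z(e_a - e_b))$. Only the $i=a$ and $i=b$ terms depend on $z$, and a direct rewrite gives $\log(I - (x_a + z)K_a) = \log(C_a - z K_a)$ and $\log(I - (x_b - z) K_b) = \log(C_b + z K_b)$ with $C_a := I - x_a K_a \succ 0$ and $C_b := I - x_b K_b \succ 0$. The remaining terms combine into a fixed self-adjoint matrix $L := \sum_{i \neq a,b} \log(I - x_i K_i)$, yielding
\[ h(z) = e^{-\theta t} \cdot \mathrm{tr}\exp\bigl(L + \log(C_a - z K_a) + \log(C_b + z K_b)\bigr). \]
Applying Lemma~\ref{lem:lieb_concave} (after the trivial reparametrization $z \mapsto -z$ so the signs match the hypothesis there) with $C_1 = C_a,\ K_1 = K_a,\ C_2 = C_b,\ K_2 = K_b$ gives concavity of the trace exponential in a neighborhood of $z=0$. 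Scaling by the positive constant $e^{-\theta t}$ preserves concavity, so $h$ is concave near $z=0$.

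Finally, the same argument applies with the base point replaced by $x + z_0(e_a - e_b)$ for any feasible $z_0$, so $h$ is locally concave at every point of its domain. Because $h$ is smooth (trace exponentials of affine families of self-adjoint matrices are real-analytic), local concavity at every point upgrades to global concavity on the full interval where $x + z(e_a - e_b) \in [0,1]^m$. This is exactly concavity under the swap $(a,b)$, and since the pair was arbitrary, $g_{t,\theta}$ is concave under swaps. The main obstacle is really just careful bookkeeping: Lieb's lemma demands $K_1,K_2 \succeq 0$, while the naive derivative of $\theta x_i M_i$ with $\theta<0$ produces $e^{\theta M_i}-I \preceq 0$. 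The substitution $K_i = I - e^{\theta M_i}$ (and the corresponding $C_i = I - x_i K_i$) is what places the problem inside the hypotheses of Lemma~\ref{lem:lieb_concave}; it is the lower-tail analogue of the substitution $K_i = e^{\theta M_i} - I$ used in the upper-tail case of~\cite{harvey2014pipage}.
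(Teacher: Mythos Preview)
Your proof is correct and follows essentially the same route as the paper: compute $\E_{\mathcal{D}(x)} e^{\theta x_i M_i} = I - x_i K_i$ with $K_i = I - e^{\theta M_i}$, isolate the two swap coordinates, and invoke Lemma~\ref{lem:lieb_concave}. If anything, you are slightly more careful than the paper on two points the paper glosses over: you make explicit the (implicitly needed) assumption $M_i \succeq 0$ to guarantee $K_i \succeq 0$, and you spell out the passage from local concavity (which is all Lemma~\ref{lem:lieb_concave} gives) to concavity on the whole swap interval. The reparametrization $z\mapsto -z$ is harmless but unnecessary: you can match Lemma~\ref{lem:lieb_concave} directly by taking $C_1=C_b,\ K_1=K_b,\ C_2=C_a,\ K_2=K_a$.
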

\begin{proof}{(of Lemma~\ref{lem:matrix_swap_concavity})}
    Let $C_i := \E_{x \sim \mathcal{D}(x)}[ e^{\theta x_i \cdot M_i} ]= x_i\cdot e^{\theta M_i} + (1-x_i)\cdot I \succ 0$, and for any $i\in[n]$
    \begin{align*}
        \E_{x \sim \mathcal{D}(x+ze_i)}[ e^{\theta x_i \cdot M_i} ] = (x_i + z) e^{\theta M_i} + (1-x_i - z)\cdot I = C_i - z\cdot (I - e^{\theta M_i}).
    \end{align*}
    Then $\forall a, b \in [n]$, 
    \begin{align*}
        &g_{t, \theta}(x+z(e_a - e_b)) \\
        &= e^{-\theta t} \cdot \mathrm{tr}\exp \left( \sum_{i\in [n]\backslash\{a, b\}} \log C_i + \log \left(C_b + z\cdot(I - e^{\theta M_b}) \right) + \log \left(C_a - z\cdot (I - e^{\theta M_a})\right)  \right)\\
        &= e^{-\theta t} \cdot \mathrm{tr}\exp \left( L + \log \left(C_b + z\cdot K_b \right) + \log \left(C_a - z\cdot K_a\right) \right)\,,
        \end{align*}
    where $K_a = (I - e^{\theta M_a})$, $K_b = (I - e^{\theta M_b})$, and $L = \sum_{i\in [n]\backslash\{a, b\}} \log C_i \in \mathbb{S}_d$. 
    If $\theta \leq 0$, then $K_a\succeq 0$ and $K_b \succeq 0$. Using Lemma~\ref{lem:lieb_concave}, $z \rightarrow g_{t, \theta}(x+z(e_a - e_b))$ is concave in $z$, and the result follows.
\end{proof}
Combining Lemma~\ref{lem:matrix_swap_concavity} and Theorem~\ref{thm:tropp_concentration} with Lemma~\ref{lem:pipage_random_alg}, we obtain Lemma~\ref{lem:renormalized-pipage}.
\end{document}